\newtheorem{thm}{Theorem}[section]
\theoremstyle{definition}
\theoremstyle{remark}
\newtheorem{rem}[thm]{Remark}
\numberwithin{equation}{section}
\newcommand{\Rbb}{\mathbb{R}}
\newcommand{\Nbb}{\mathbb{N}}
\newcommand{\Pbb}{\mathbb{P}}
\newcommand{\Qbb}{\mathbb{Q}}
\newcommand{\Ebb}{\mathbb{E}}
\newcommand{\cF}{\mathcal{F}}
\newcommand{\cK}{\mathcal{K}}
\newcommand{\cX}{\mathcal{X}}
\newcommand{\cE}{\mathcal{E}}
\newcommand{\cM}{\mathcal{M}}
\newcommand{\cL}{\mathcal{L}}
\newcommand{\cH}{\mathcal{H}}
\newcommand{\num}{\pi^{\star}}
\newcommand{\numX}{X_{\num}}
\newcommand{\numr}{R_0^{\num}}
\newcommand{\numR}{R^{\num}}
\newcommand{\nums}{S_0^{\num}}
\newcommand{\si}{\sigma}
\newcommand{\m}{\mu}
\newcommand{\ka}{\kappa}
\newcommand{\la}{\lambda}
\newcommand{\La}{\Lambda}
\newcommand{\wt}{\widetilde}
\newcommand{\wb}{\overline}
\DeclareMathOperator*\uplim{\overline{lim}}
\newtheorem{theorem}{Theorem}[section]
\newtheorem{corollary}[theorem]{Corollary}
\theoremstyle{definition}
\newtheorem{exam}{Example}[section]
\newtheorem*{exam*}{Example}
\DeclareMathAlphabet{\pazocal}{OMS}{zplm}{m}{n}
\begin{document}

\title[Log-optimality with small liability stream]{Log-optimality with small liability stream}%
\author{Michail Anthropelos}
\address{Department of Banking and Financial Management\\
University of Piraeus}
\email{anthropel@unipi.gr}
\thanks{The research project was supported by the Hellenic Foundation for Research and Innovation (H.F.R.I.) under the ``2nd Call for H.F.R.I. Research Projects to support Faculty Members \& Researchers'' (Project Number: 3444/2022, Project Title: ``\textit{Valuation and Optimal Investment of Pension Plans}'').}
\thanks{We would like to thank Scott Robertson, Mihai Sirbu, Gordan {\v Z}itkovi{\' c} and Thaleia Zariphopoulou for their valuable comments and suggestions.}

\author{Constantinos Kardaras}
\address{Department of Statistics \\
London School of Economics}
\email{k.kardaras@lse.ac.uk}

\author{Constantinos Stefanakis}
\address{Department of Banking and Financial Management\\
University of Piraeus}
\email{kstefanakis@unipi.gr}

\begin{abstract}
In an incomplete financial market with general continuous semimartingale dynamics; we model an investor with log-utility preferences who, in addition to an initial capital, receives units of a non-traded endowment process. Using duality techniques, we derive the fourth-order expansion of the primal value function with respect to the units $\epsilon$, held in the non-traded endowment. In turn, this lays the foundation for expanding the optimal wealth process, in this context, up to second order w.r.t. $\epsilon$. The key processes underpinning the aforementioned results are given in terms of Kunita-Watanabe projections, mirroring the case of lower order expansions of similar nature. Both the case of finite and infinite horizons are treated in a unified manner.
\end{abstract}

\maketitle

\vspace{1cm}
\begin{center}
\textit{October, 2025}
\end{center}

\vspace{1.5cm}
\newpage

\section*{Introduction}
\subsection*{Discussion}

A central problem in financial economics involves an investor allocating initial wealth across an array of assets, with the goal of maximizing the expected utility of terminal wealth. This optimal investment problem in continuous-time settings was initially studied by Merton in \cite{M69,M71}, who used dynamic programming techniques to derive a non-linear partial differential equation characterizing the value function. 

A major conceptual advancement came with the development of the theory of equivalent martingale measures by Ross \cite{R76}, Harrison and Kreps \cite{HK79} and Harrison and Pliska \cite{HP81}, which enabled the application of martingale and duality methods to such optimization problems. Under the assumption of market completeness, this duality approach was further developed by Pliska \cite{P86}; Karatzas, Lehoczky, and Shreve \cite{KLS87}; and Cox and Huang \cite{CH89,CH91}. The more intricate case of incomplete markets was addressed in foundational works by He and Pearson \cite{HP91a,HP91b} and by Karatzas, Lehoczky, Shreve, and Xu \cite{KLSX91}. Building on these contributions, Kramkov and Schachermayer \cite{KS99,KS03} established minimal conditions on both the utility function and the financial market under which the core results of the theory remain valid.

In the context of incomplete markets, a natural extension of the problem involves maximizing expected utility when the investor receives an additional exogenous random endowment. Typical examples of this being pension funds. In complete markets, endowments can be perfectly replicated using traded assets, effectively reducing the problem to one with augmented initial wealth and no random endowment. However, as noted among others in \cite{HH07}, real-world markets are typically incomplete, with perfect replication impeded by frictions such as transaction costs, non-traded assets, and portfolio constraints. In such settings, assets are associated with a range of arbitrage-free prices, and the risk of holding them cannot be fully hedged through market trading alone. 

Consequently, analyzing the value function and its solution in an incomplete market setting is undeniably more challenging. Notable contributions addressing this challenge include Cvitani\'{c}, Schachermayer, and Wang \cite{CSW01}, who characterized the optimal terminal wealth in a general semimartingale model via a dual formulation. \cite{KZ03} extends this framework to account for intertemporal consumption. Hugonnier and Kramkov \cite{HK04} treated both the initial capital and the units held in the endowment as optimization variables, hence not requiring the use of finitely additive measures. \cite{OZ09} studies the case of unbounded random endowments and utility functions defined over the entire real line, providing necessary and sufficient conditions for the existence of a solution to the primal problem. \cite{M15} obtains necessary and sufficient conditions in the general framework of an incomplete financial model with a stochastic field utility and intermediate consumption, occurring according to some stochastic clock, while \cite{M17} generalizes the former by also incorporating a random endowment process into the model. Staying within the context of intermediate consumption, \cite{CCFM17} show that the key conclusions of the utility maximization theory hold under the assumptions of No Unbounded Profit with Bounded Risk (NUPBR) and of the finiteness of both primal and dual functions.

\subsection*{Contributions}

In an incomplete financial market with general continuous semimartingale dynamics, we model an investor with log-utility preferences who, in addition to an initial capital, receives units of a non-traded endowment process. As explicit solutions to the associated utility maximization problem are generally unavailable, even for simple model specifications; to address this, we assume that the payoff from the non-traded endowment is small relative to the investor's total wealth. Within this framework, our contributions to the literature are as follows:
\begin{enumerate}
\item \textit{Fourth-order expansion and nearly optimal strategies:}
Using duality techniques, we derive the fourth-order expansion of the primal value function with respect to the number of units $\epsilon$, held in the non-traded endowment. In turn, this lays the foundation for expanding the optimal wealth process, in this context, up to second order w.r.t. $\epsilon$. To the best of our knowledge this is the first result in this direction, extending the work of \cite{KS06}, \cite{KS07} for the case of log-utility. Interestingly enough, the key processes underpinning the aforementioned results are given in terms of Kunita-Watanabe projections, mirroring the case of lower order expansions of similar nature. 

\item \textit{Long-horizon asymptotics:}
Our model also accommodates  the case of ``infinite time horizons". This is a non-trivial addition which allows for the original problem to be understood in ``myopic terms" for investors with distant maturities. The upshot being that one can obtain explicit results in a larger array of models, as the dependence on the horizon is eliminated.
\end{enumerate}

These results have two key implications. First they allow for a better understanding of log-optimal behavior in the incomplete setting, under the presence of a non-traded endowment. This stems from the fact that the first order approximation of the optimal wealth process with non-traded endowment, w.r.t. $\epsilon$, is actually optimal; assuming market completeness. This provides unique insight on how market incompleteness affects asset allocation. Second, considering the case of arbitrarily large time horizons, i.e. the infinite horizon setting, comes with its own merits. Namely it enables analyzing assets which do not have a certain pre-specified maturity. A prominent example of such a situation arises, for example, in the context of a pension fund's liabilities.

\subsection*{Related literature} 

The existing literature on optimal investment is too vast in order to give a complete overview. Instead, we focus on the specific area of utility-based hedging and pricing, which is closely aligned to our work. 

An appealing choice of utility in this context is the exponential one, as it allows for closed-form results in various settings. This is due to its property of separating the value function into components associated with wealth and trading; simplifying the analysis considerably. Prominent works in this context include \cite{H02}, \cite{MZ04}, \cite{GH07}, and \cite{LL12}. These studies leverage a linearization technique—commonly referred to as the Cole-Hopf transformation or distortion power—first introduced in claim valuation by \cite{Z01}, which reduces the resulting nonlinear HJB PDE to a linear form, solvable via standard methods. Further generalizations by \cite{FS08} and \cite{FS10} showed that, even in models with general asset dynamics, the exponential utility-based price admits an explicit expression; although these formulas are often less interpretable. Complementary to these results, \cite{D06} used duality techniques to derive an explicit form for the optimal hedging strategy; with related developments also appearing in \cite{M13}. 

Even within the relatively tractable exponential utility framework, explicit expressions are not always obtainable. For example, in models where the claim depends on the traded asset Sircar and Zariphopoulou \cite{SZ05} derive asymptotic expansions for the utility-indifference price in the context of fast mean-reverting volatility. Henderson and Liang \cite{HL16} consider a multidimensional non-traded asset model subject to intertemporal default risk, and develop a semigroup approximation using splitting techniques.

Given the scarcity of explicit results, various asymptotic approaches have been proposed for pricing and hedging in incomplete markets. Monoyios \cite{M04, M07}, for example, works within a Black-Scholes framework with basis risk and approximates the hedging strategy in powers of $1 - \rho^2$, where $\rho$ denotes the correlation between traded and non-traded assets. In \cite{H02} and \cite{HH02} the authors consider the case of power utility and derive the second-order expansions of the investor's value function with respect to a small position in the contingent claim, thereby approximating both the hedging strategy and reservation price.

These early results of Henderson and Hobson were significantly extended in \cite{KS06, KS07}, where the authors study a general semimartingale framework; under a broad class of utility functions defined on the positive real line. Particularly in \cite{KS06} they derive the second-order expansion of the value function which in turn is used to get a first order approximation for marginal (utility-based) indifference prices and study their qualitative features. A related analysis by Kallsen \cite{K02} studies first order marginal price approximations under local utility maximization.

In \cite{KS07}, using techniques developed in their earlier work, the authors also provide a first-order approximation of the utility-based hedging strategy w.r.t. the units held in the non-traded endowment and demonstrate its relation to quadratic hedging. Similar asymptotic results are found in \cite{M10}, which considers valuation and hedging in the presence of parameter uncertainty under exponential utility and partial information. Therein, the indifference price is approximated up to linear order in the risk aversion parameter via PDE methods.

In the same spirit, \cite{KR11} analyzes utility-based pricing and hedging under exponential utility for a vanishing risk aversion. \cite{KMV14}, focusing on exponential L\'{e}vy models, presents alternative representations of the results in \cite{KS06, KS07} for power utility functions, that avoid the need for a change of numeraire. 

Within the setting of exponential L\'{e}vy processes, \cite{MT16} derives a non-asymptotic approximation for the exponential utility-based indifference price. The approach therein extends the earlier small risk-aversion asymptotics and yields a closed-form approximation by treating the L\'{e}vy model as a perturbation of the classical Black-Scholes framework.

There is also an associated strand of literature which conducts sensitivity analysis of the utility maximization problem w.r.t. other model perturbations. For example \cite{MS19} studies the sensitivity of the expected utility maximization problem in a continuous semimartingale market w.r.t. small changes in the market price of risk. \cite{M20} investigates the behavior of the expected utility maximization problem under small perturbations of the numeraire, while \cite{MS24} study the response of the optimal investment problem to small changes of the stock price dynamics.

\subsection*{Structure of the paper}
This paper is organized as follows: in \S 2 the setup of the model is given; therein dynamics for the financial market, investment opportunities and illiquid asset are explained. In \S 3 we perform the fourth order expansion for the value function $u(\epsilon)$ of the problem of optimal investment with a random endowment. This section lays the foundation for subsequent results. In \S 4 we derive the second order expansion for the optimal wealth $X^{\epsilon}$, i.e. the unique solution of $u(\epsilon)$. Both in \S 3 and in \S 4, the cases of finite and infinite horizon are treated simultaneously. We conclude with a discussion of  derived results as well as a concrete example, presented in \S 5.
\section{Setup}
\subsection{The financial market}
Fix a probability space $(\Omega,\cF,\Pbb)$ equipped with a filtration $\cF(\cdot)$ satisfying the usual conditions, s.t. $\cF(0)$ is trivial a.s. We set:
\begin{equation*}
\cF(\infty):=\sigma\bigg(\bigcup_{t\in\Rbb_{\geq 0}}\cF(t)\bigg)\subseteq \cF,
\end{equation*}
and work within the infinite horizon setting, i.e. $\Rbb_{\geq 0}=[0,\infty)$, since any finite time horizon can be naturally embedded in it. Unless otherwise explicitly stated, identities or inequalities involving random variables are interpreted as being valid almost everywhere under $\Pbb$. Similarly, comparisons between processes will be understood up to indistinguishability. Moreover for $f:\Omega\rightarrow\Rbb$, let $f^{+}:=\max(f,0)$, $f^{-}:=\max(-f,0)$ and for $p\geq 1$, denote the class of Lebesgue $p$-integrable functions in $(\Omega,\cF,\Pbb)$ by $\cL_{p}$. In cases where it is important to underline that the sigma algebra is different than $\cF$, it will be stated explicitly.

Consider an agent that trades in $1+d\in\Nbb_{>0}$ assets; a baseline with stochastic exponential price $\wt{S}_0$, driven by a continuous finite variation process $R_0$ s.t. $R_0(0)=0$ and $d$ risky assets. Discounting by $\wt{S}_0$ we denote the cumulative returns of the risky assets, relative to the baseline, by $R=(R_{i}; 1\leq i\leq d)$ and assume they satisfy the following continuous semimartingale dynamics:
\begin{equation*}
R_{i}=A_{i}+M_{i}, \qquad 1\leq i \leq d.
\end{equation*}
Here, each real-valued component $A_{i}$ of the vector-valued $A=(A_{i}; 1\leq i\leq d)$ is a continuous finite variation process with $A_{i}(0)=0$, whereas each $M_{i}$ in $M=(M_{i}; 1\leq i \leq d)$ is a continuous local martingale with $M_{i}(0)=0$. In turn, the prices of the risky assets $\wt{S}=(\wt{S}_{i}; 1\leq i \leq d)$, discounted by $\wt{S}_0$, satisfy:
\begin{equation*}
\wt{S}_{i}=\wt{S}_{i}(0)\cE(R_{i}), \qquad \wt{S}_{i}(0)>0, \qquad 1\leq i\leq d,
\end{equation*}
where $\cE(\cdot)$ denotes the stochastic exponential.

We introduce the continuous, nondecreasing scalar process:
\begin{equation*}
O:=\sum_{i=1}^{d}\left(\int_0^{\cdot}|dA_{i}(t)|+[M_{i}]\right),
\end{equation*}
with $\int_0^{S}|dA_{i}(t)|$ denoting the total variation of $A_{i}$ on the interval $[0,S]$, for $S>0$. This scalar process $O$ plays the role of an ``operational clock" for the vector semimartingale $R$: all processes $A_{i}$ and $[M_{i},M_{j}]$ for $1\leq i \leq d$ and $1\leq j \leq d$ are absolutely continuous w.r.t. this clock (i.e. the respective induced measures). Hence, there exist predictable processes $a=(a_{i}; 1\leq i\leq d)$ and $c=(c_{ij}; 1\leq i,j\leq d)$, s.t.:
\begin{equation*}
A=\int_0^{\cdot}a(t)dO(t), \qquad [M]=\int_0^{\cdot}c(t)dO(t).
\end{equation*}
By altering it on a $(\Pbb\otimes O)$-null set if necessary, we shall assume throughout that the process $c$ takes values in the space of symmetric, nonnegative-definite, $d\times d$ matrices.
\subsection{Investment opportunities}
In the market $(1,\wt{S})$, an initial capital $x\in\Rbb$ and a choice of a (self-financing) strategy $\theta=(\theta_{i};1\leq i \leq d)$ (assumed to be predictable and $\wt{S}$-vector integrable) result in a \textit{wealth process}:
\begin{equation}\label{eq:wealth}
\wt{X}(\cdot;x,\theta):=x+\int_0^{\cdot}\theta(t)d\wt{S}(t) \ \footnotemark,
\end{equation}
\footnotetext{We'll usually denote $\wt{X}(\cdot;x,\theta)$ by $\wt{X}$ for notational simplicity.}
where the above is understood as a vector stochastic integral. To avoid the so-called doubling strategies, we restrict the above class as follows: a wealth process $\wt{X}$ with initial capital $\wt{X}(0)=x$ is called \textit{admissible} if:
\begin{center}
$\wt{X}>0$.
\end{center}
We denote this subset of wealth processes by $\wt{\cX}(x)$. The union $\cup_{x>0}\wt{\cX}(x)$ is denoted by $\wt{\cX}$.

The market's viability is intimately connected to the following condition, as shown in \cite[Theorem 2.31]{KK21}:
\begin{equation}\label{eq:ass_arbitrage}\tag{A1}
\text{$c$ is non-singular, $(\Pbb\otimes O)$-a.e. and $\int_0^{K}(a(t))^{'}(c(t))^{-1}a(t)dO(t)<\infty$, $\forall K>0$.} \ \footnotemark
\end{equation}
\footnotetext{A bit more generally it is necessary that $a$ is in the range of $c$, $(\Pbb\otimes O)$-a.e. Defining the ``pseudo-inverse" matrix-valued process $c^{\dagger}:=\lim_{m\rightarrow\infty}((c+(1/m)I_{d})^{-2}c)$, where $I_{d}$ is the identity operator on $\Rbb^{d}$; we also require $\int_0^{K}(a(t))^{'}(c(t))^{\dagger}a(t)dO(t)<\infty$, $\forall K>0$.}
In particular, the above condition implies the existence of the supermartingale numeraire; denote it by $\numX:=\cE(\numR)$, where $\numR:=\int_0^{\cdot}(\num(t))^{'}dR(t)$ and $\num:=c^{-1}a$ (refer to \S 2 in \cite{KK21} for further details on this concept). Note that $\numX$ can be used as a new numeraire, under which each $\wt{X}\in\wt{\cX}$ becomes a local martingale. In fact, by considering the auxiliary market:
\begin{equation*}
S:=\left(S_0:=\frac{1}{\numX},S:=\frac{\wt{S}}{\numX}\right),
\end{equation*}
we have for each wealth process $\wt{X}$ in $(1,\wt{S})$:
\begin{equation}\label{eq:wealth_martingale}
\wt{X}/\numX=x+\int_0^{\cdot}\sum_{i=1}^{d}\theta_{i}(t)dS_{i}(t)+\int_0^{\cdot}\left(\wt{X}(t)/\numX(t)-\sum_{i=1}^{d}\theta_{i}(t)S_{i}(t)\right)\frac{dS_0(t)}{S_0(t)},
\end{equation}
where $S$ is a $\Rbb^{1+d}$-valued local martingale. 
\subsection{Illiquid asset}
Besides trading in the financial market, the agent holds $\epsilon\in\Rbb_{\geq 0}$ units of an exogenous, non-traded cumulative stream $\La$ that is absolutely continuous on $\Rbb_{\geq 0}$, given in discounted terms (w.r.t. $\wt{S}_0$) by:
\begin{equation*}
\La=\int_0^{\cdot}\la(t)/\wt{S}_0(t)dt,
\end{equation*}
for a predictable process $\la$. The only regularity assumption on the illiquid asset is that it can be super and subreplicated using the traded asset $\wt{S}$. In other words, we use the following standing assumption:
\begin{equation}\label{eq:ass_illiquid}\tag{A2}
\text{$\{\wt{X}\in\wt{\cX}:|\La(T)|\leq \wt{X}(T)\}\neq\emptyset$,}
\end{equation}
where $T$ is any a.s. finite stopping time.
\begin{rem}
Note that the non-traded asset $\La$, expressed under the numeraire $\numX$, is given by $L:=\La/\numX$. Now, integration by parts yields:
\begin{equation}\label{eq:illiquid_martingale}
L=\int_0^{\cdot}\La(t)dS_0(t)+F,
\end{equation}
$F:=\int_0^{\cdot}S_0(t)d\La(t)$. In turn, we have for a wealth process $\wt{X}$, with initial capital $x$, that:
\begin{equation*}
\wt{X}/\numX-\int_0^{\cdot}\La(t)dS_0(t),
\end{equation*}
is simply \eqref{eq:wealth_martingale} with a shifted position in $S_0$. In other words, when considering the illiquid asset discounted by the supermartingale numeraire, the non-replicable part of $L$ comes solely from $F$.
\end{rem}
\subsection{Utility maximization problem}
Fix any stopping time $T$ s.t. either $T<\infty$ or $T=\infty$, that will serve as the maturity of the agent. We focus on the model $(S,F)$ in the sequel. Therein an initial capital $x\in\Rbb$ and a choice of $\theta=(\theta_{i};1\leq i \leq d)$ (assumed to be predictable and $S$-integrable) result in a wealth process in $S$:
\begin{equation}\label{eq:wealth_martingale_shifted}
X(\cdot;x,\theta):=x+\int_0^{\cdot}\theta(t)dS(t).
\end{equation}
Hence, all processes of the form shown in \eqref{eq:wealth_martingale} are in fact wealth processes in $S$. The subset of admissible processes will be characterized by:
\begin{equation*}
\text{$X>0$ on $\llbracket 0,T\rrbracket :=\{(\omega,t)\in\Omega\times\Rbb_{\geq 0}:t\leq T(\omega)\}$,}
\end{equation*}
which shall be denoted by $\cX(x)$ and its union $\cup_{x>0}\cX(x)$ by $\cX$. In this context, following \cite{DS97}, a process $X\in\cX$ is called \textit{maximal} if for each $X^{'}\in\cX$ s.t. $X^{'}(T)\geq X(T)$ and $X^{'}(0)=X(0)$, we necessarily have $X^{'}=X$. Lastly, a wealth process in $S$, $X$ is called \textit{acceptable} if there exists a maximal process $X^{'}\in\cX$ s.t. $X+X^{'}\in\cX$. Note that when the horizon is infinite, for each $X\in\cX$ we have $X(\infty):=\lim_{t\rightarrow\infty} X(t)$ a.s., by the fact that $X$ is a continuous positive local martingale. Hence for any acceptable process $X$, $\lim_{t\rightarrow\infty}X(t)$ exists a.s. in that case since $X=X^{'}-X^{''}$, where $X^{'}\in\cX$ and $X^{''}$ is maximal.

Assuming:
\begin{equation}\label{eq:ass_illiquid_infinite}\tag{A3}
\begin{aligned}
&\text{$\lim_{t\rightarrow\infty} F(t)$ exists a.s., and} \\
&\text{$\{X\in\cX:|F(T)|\leq X(T)\}\neq\emptyset$,}
\end{aligned}
\end{equation}
we define the following class:
\begin{equation*}
\cX(x,\epsilon):=\{\text{$X$ a wealth process in $S$}:\text{$X$ is acceptable, $X(0)=x$ and $X(T)-\epsilon F(T)>0$}\}.
\end{equation*}
From the definition of acceptable processes, we also deduce that:
\begin{equation*}
\cX(x,0)=\cX(x), \qquad x>0.
\end{equation*}
The set of points $(x,\epsilon)$ where $\cX(x,\epsilon)$ is not empty is a closed convex cone in $\Rbb^2$. Denote its interior by $\cK:=\{(x,\epsilon):\cX(x,\epsilon)\neq\emptyset\}^{\mathrm{o}}$ and note, as shown in \cite{HK04}, that \eqref{eq:ass_illiquid_infinite} implies $(x,0)\in\cK$, $x>0$.
In turn consider:
\begin{equation*}
u(x,\epsilon):=\sup_{X\in\cX(x,\epsilon)}\Ebb[\ln(X(T)-\epsilon F(T))],
\end{equation*}
i.e. the log-optimization problem in $(S,F)$. To avoid the trivial case we need:
\begin{equation*}
\text{We have $u(x,\epsilon)<\infty$, for some $(x,\epsilon)\in\mathcal{K}$,}
\end{equation*}
In fact the above should always hold for the case of log-utility in this context. To see that note $(x,0)\in\mathcal{K}$ for any $x>0$ under \eqref{eq:ass_illiquid_infinite}, as shown in \cite{HK04}. In particular for any such point we have $\cX(x,0)=\cX(x)$. Using the inequality $\ln(x)\leq x-1, \ x>0$ and the fact that $\Ebb[X(T)]\leq x$ for $X\in\cX(x)$ shows the claim. The concavity of $u(x,\epsilon)$ on the open set $\mathcal{K}$ finally gives that $u(x,\epsilon)<\infty$ for all $(x,\epsilon)\in\mathcal{K}$. 

In turn, following \cite{HK04}, since the log-utility satisfies the asymptotic elasticity condition, i.e. for $U(x):=\ln(x)$ we have $\uplim_{x\rightarrow\infty}xU^{'}(x)/U(x)<1$, $S$ trivially satisfies the condition of NFLVR and we also have \eqref{eq:ass_illiquid_infinite}; the solution to $u(x,\epsilon)$ exists and is unique. We focus on:
\begin{equation}\label{eq:value_discounted}
u(\epsilon):=u(1,\epsilon)=\Ebb[\ln(1+X^{\epsilon}(T)-\epsilon F(T))],
\end{equation}
where $X^{\epsilon}(T)$ denotes the shifted counterpart of the solution of $u(1,\epsilon)$ s.t. it starts from zero.
\begin{rem}\label{rem:orig_opt}
Note that working with the optimization problem in $(S,F)$ doesn't come with any loss of generality in this context, since by using the numeraire invariance of log-utility, along with $(\ln(\numX(T)))^{-}\in\cL_1$ as well as \eqref{eq:ass_illiquid} and \eqref{eq:ass_illiquid_infinite} yields that the unique solution to the respective optimization problem of $u(\epsilon)$ in $(1,\wt{S},\La)$, denoted by $\wt{X}^{\epsilon}$, is given as:
\begin{equation*}
\wt{X}^{\epsilon}(T)=\numX(T)\left(1+X^{\epsilon}(T)+\epsilon\int_0^{T}\La(t)dS_0(t)\right),
\end{equation*}
for any a.s. finite stopping time $T$. Even if the "original" utility maximization problem in $(1,\wt{S},\La)$ is not well-defined for potentially infinite $T$, due to it taking infinite value, it is still well-defined for the numeraire-discounted market. Hence it is preferable to work with the latter for long time horizon settings.
\end{rem}

\section{Fourth order expansion of value function w.r.t. $\epsilon$}

For $p\geq 1$ denote the class of (local) martingales s.t. for each $M$ we have $\Ebb[(\wb{M}(T))^{p}]<\infty$ by $\cH_{p}$; where $\wb{M}:=\sup_{t\in[0,\cdot]}|M(t)|$. Using the above, define:
\begin{flalign*}
&\cM_{p}:=\left\{M\in\cH_{p}:\text{$M(0)=0$ and $M=\int_0^{\cdot}\theta(t)dS(t)$}\right\}.
\end{flalign*}
As it is discussed in \cite{KS06} the following assumption is crucial in order to derive the second order expansion of the value function (with a non-traded endowment), and particularly its lower bound:
\begin{equation*}
\text{$\exists x>0$ and $M\in\cM_{2}$ s.t. $|L(T)|\leq x+M(T)$.}
\end{equation*}
In our context, where we go up to fourth order on the value function, the aforementioned is naturally extended to the following stronger version:
\begin{equation}\label{eq:ass_illiquid_stronger}\tag{A3II}
\begin{aligned}
&\text{$\lim_{t\rightarrow\infty} F(t)$ exists a.s., and} \\
&\text{$\exists x>0$ and $M\in\cM_{4}$ s.t. $|F(T)|\leq x+M(T).$}
\end{aligned}
\end{equation}
Moving forward, the key tools for the fourth order expansion of $u(\epsilon)$ are two Kunita-Watanabe (K-W) projections. Denote by $\cF(T)$ the ``stopped" sigma-algebra at $T$, i.e.:
\begin{equation*}
\cF(T)=\left\{A\in\cF(\infty):A\cap\{T\leq t\}\in\cF(t) \ \text{for all $t\geq 0$}\right\},
\end{equation*} 
where the restriction to sets in $\cF(\infty)$ is to take into account the possibility of the stopping time being infinite. Now, as $F$ is progressive (in $\Rbb_{\geq 0}$) and $F(\infty)$ is $\cF(\infty)$-measurable we have that $F(T)$ is $\cF(T)$-measurable. In fact,  when we have an a.s. infinite horizon it holds that $\cF(T)=\cF(\infty)$. In particular, as $F(T)\in\cL_2(\cF(T))$ it admits an orthogonal projection on the space of stochastic integrals w.r.t. $S$ that start at zero. In turn, this gives rise to the following K-W decomposition:
\begin{equation*}
\Ebb[F(T)|\cF(\cdot)]=\Delta+N,
\end{equation*}
where $N$ is strongly orthogonal to $\Delta$ and $N(0)=\Ebb[F(T)]$. Assuming:
\begin{equation}\label{eq:ass_fourth}\tag{A4}
\Ebb[(N(T))^4]<\infty,
\end{equation}
and applying similar reasoning to the above, we have:
\begin{equation*}
\Ebb[(N(T))^2|\cF(\cdot)]=\Gamma+P,
\end{equation*}
where $P$ is strongly orthogonal to $\Gamma$, which is a stochastic integral w.r.t. $S$, and $P(0)=\Ebb[(N(T))^2]$. In fact, by \eqref{eq:ass_illiquid_stronger} and \eqref{eq:ass_fourth} we have $\Delta,N\in\cH_4$ and $\Gamma,P\in\cH_2$.

\begin{theorem}\label{the:fourth_value}
Assume \eqref{eq:ass_arbitrage}, \eqref{eq:ass_illiquid_stronger} and \eqref{eq:ass_fourth}; then we have:
\begin{equation}\label{eq:fourth_value}
u(\epsilon)+\epsilon\Ebb[F(T)]+\frac{\epsilon^2}{2}\Ebb[(N(T))^2]+\frac{\epsilon^3}{3}\Ebb[(N(T))^3]+\frac{\epsilon^4}{4}\Ebb[(N(T))^4]+\epsilon^4\Ebb[(\Gamma(T))^2/2-(N(T))^2\Gamma(T)]=o(\epsilon^4),
\end{equation}
as $\epsilon\rightarrow 0+$.
\end{theorem}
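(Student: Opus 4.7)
The plan is to sandwich $u(\epsilon)$ between matching primal lower and dual upper bounds, both built from the two Kunita-Watanabe decompositions $F(T) = \Delta(T) + N(T)$ and $N(T)^2 = \Gamma(T) + P(T)$.

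For the lower bound I would insert into the primal the explicit candidate $X^{\epsilon}(T) := \epsilon \Delta(T) + \epsilon^2 \Gamma(T)$. Since $\Delta$ and $\Gamma$ are stochastic integrals of $S$ starting at zero, so is $X^\epsilon$; after a standard truncation on the event $\{1 - \epsilon N(T) + \epsilon^2 \Gamma(T) \leq 0\}$, whose probability decays faster than any power of $\epsilon$ thanks to $N \in \cH_4$ and $\Gamma \in \cH_2$, the modified process lies in $\cX(1,\epsilon)$ and differs from the untruncated candidate by $o(\epsilon^4)$ in the primal value. Using $F(T) = \Delta(T) + N(T)$, the terminal position is $1 - \epsilon N(T) + \epsilon^2 \Gamma(T)$; Taylor-expanding $\ln(1+w)$ with $w := -\epsilon N(T) + \epsilon^2 \Gamma(T)$ up to fourth order, with remainder $\Ebb[|w|^5] = o(\epsilon^4)$ controlled by H\"older against $N \in \cL_4$ and $\Gamma \in \cL_2$, and taking expectations, the cancellations $\Ebb[\Gamma(T)] = 0$ and $\Ebb[N(T)\Gamma(T)] = 0$ (from strong orthogonality of $N$ to stochastic integrals of $S$), combined with $\Ebb[N(T)] = \Ebb[F(T)]$, reproduce exactly the right-hand side of \eqref{eq:fourth_value} as a lower bound.

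For the upper bound I would use the log-utility dual inequality $u(\epsilon) \leq \ln(1 - \epsilon \Ebb[Y(T) F(T)]) - \Ebb[\ln Y(T)]$ valid for every $Y \in \cY$, evaluated at a stochastic exponential $Y^\epsilon := \cE(\epsilon N + \epsilon^2 \eta)$, where $\eta$ is a local martingale strongly orthogonal to $S$ extracted from the second-level K-W decomposition so as to absorb the $\Gamma$-correction. Because $[N,S] = [\eta,S] = 0$, $Y^\epsilon$ is strongly orthogonal to $S$, so $Y^\epsilon \cdot X$ is a local martingale for every $X \in \cX$ and $Y^\epsilon$ lies in $\cY$. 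Expanding $\ln Y^\epsilon(T)$ via $\ln \cE(M) = M - M(0) - \tfrac{1}{2}[M]$, and $\Ebb[Y^\epsilon(T) F(T)]$ via integration by parts (using $[Y^\epsilon, \Delta] = 0$ and the It\^o correction from $[Y^\epsilon, N]$), the same orthogonality and integrability facts as on the primal side ensure that the dual value at $Y^\epsilon$ matches the lower bound up to $o(\epsilon^4)$. The chief obstacle is this dual construction: pinpointing $\eta$ so that the fourth-order coefficient reproduces $\epsilon^4 \Ebb[\Gamma(T)^2/2 - N(T)^2 \Gamma(T)]$ exactly, while keeping $Y^\epsilon$ in $\cY$ and within a class for which the stochastic-exponential and logarithmic expansions commute with expectation. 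Both bounds hinge on the sharp integrability $N \in \cH_4$ and $\Gamma \in \cH_2$ provided by \eqref{eq:ass_illiquid_stronger} and \eqref{eq:ass_fourth}, used via H\"older and Burkholder-Davis-Gundy-type bounds to control the remainders at order $\epsilon^4$.
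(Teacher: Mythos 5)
Your overall architecture --- a primal lower bound from the candidate $\epsilon\Delta+\epsilon^{2}\Gamma$ and a dual upper bound from a perturbation of the density built on $N$ and the second-level Kunita--Watanabe decomposition --- is the same as the paper's, and your formal bookkeeping of the cancellations at orders $\epsilon$ through $\epsilon^{4}$ is correct. However, both halves contain genuine gaps. On the lower bound, a ``truncation on the event $\{1-\epsilon N(T)+\epsilon^{2}\Gamma(T)\le 0\}$'' is not an operation available inside $\cX(1,\epsilon)$: that event is only $\cF(T)$-measurable, and modifying the terminal value on it does not produce an adapted wealth process. The paper instead truncates through hitting times $\tau_{\epsilon}$ at levels of order $1/\epsilon$ for $\Delta,N,M$ and $1/\epsilon^{2}$ for $\Gamma,P$, and --- crucially --- augments the candidate by $\epsilon(M-M^{\tau_{\epsilon}})$, where $x+M$ is the superreplicating martingale from \eqref{eq:ass_illiquid_stronger}; without this piece the stopped candidate need not satisfy $X(T)-\epsilon F(T)>0$ on $\{\tau_{\epsilon}\le T\}$, since stopping $\Delta$ and $\Gamma$ does nothing to control $F(T)$ there. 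Your quantitative claims also fail: under $N\in\cH_{4}$, $\Gamma\in\cH_{2}$ the bad-event probability is only $o(\epsilon^{4})$ (via $\lim_{z\to\infty}z^{p}\Pbb(\zeta>z)=0$ for $\Ebb[\zeta^{p}]<\infty$), not faster than every power of $\epsilon$; and the remainder $\Ebb[|w|^{5}]$ with $w=-\epsilon N(T)+\epsilon^{2}\Gamma(T)$ requires a fifth moment of $N(T)$ that is not assumed. The paper avoids both problems by keeping the argument of the logarithm bounded away from $0$ via the stopping times, writing a fourth-order expansion with a bounded multiplicative factor $\xi^{L}_{\epsilon}\in(0,2^{4}]$, and disposing of each excess term with the tail estimate $\lim_{z\to\infty}z^{-p_{2}}\Ebb[\zeta^{p_{1}+p_{2}}\mathbf{1}_{\zeta\le z}]=0$.

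On the upper bound you acknowledge that the construction of $\eta$ is unresolved, and the direction you indicate is off: the second-order correction entering the dual side is $P$, the component of $\Ebb[(N(T))^{2}\mid\cF(\cdot)]$ \emph{orthogonal} to the $S$-integrals, not $\Gamma$. The paper does not use a stochastic exponential at all; it applies the pointwise Fenchel inequality with the affine, stopped density $1+\epsilon N^{\tau_{\epsilon}}(T)+\epsilon^{2}P^{\tau_{\epsilon}}(T)$, proves that $1+X^{\epsilon}$ is a supermartingale under the induced measure by combining the orthogonality of $N,P$ to $S$ with the fact that $1+X^{\epsilon}+\epsilon(x+M)$ is a positive $S$-integral (again via \eqref{eq:ass_illiquid_stronger}), and then expands $-\ln$ of the density. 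The $\Gamma$-dependent term of \eqref{eq:fourth_value} only appears on this side after the identities $(N(T))^{2}=\Gamma(T)+P(T)$ and $\Ebb[(N(T))^{2}\Gamma(T)]=\Ebb[(\Gamma(T))^{2}]$, which convert $\Ebb[(P(T))^{2}/2-(N(T))^{2}P(T)+(N(T))^{4}/4]$ into the stated expression. As written, your upper bound is a plan rather than a proof, and the matching of the fourth-order coefficients --- the entire point of the theorem --- is exactly the step left open.
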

\begin{proof}
We do some general setup that will be useful in deriving  both bounds for the value function $u(\epsilon)$.

Let $x+M$ be the martingale implied by \eqref{eq:ass_illiquid_stronger}. Define:
\begin{equation*}
\epsilon^{L}:=1\land\frac{1}{6x}\land\frac{1}{4|\Ebb[F(T)]|}\land\frac{1}{2(P(0))^{1/2}}.
\end{equation*}
For $\epsilon\in(0,\epsilon^{L})$, consider the stopping times:
\begin{flalign*}
&\tau_{\epsilon}^{\Delta}:=\inf\{t:|\Delta(t)|\geq1/6\epsilon\}, \qquad \tau_{\epsilon}^{\Gamma}:=\inf\{t:|\Gamma(t)|\geq 1/6\epsilon^2\}, \\
&\tau_{\epsilon}^{N}:=\inf\{t:|N(t)|\geq 1/4\epsilon\}, \qquad \tau_{\epsilon}^{M}:=\inf\{t:x+M(t)\geq 1/6\epsilon\}, \\
&\tau_{\epsilon}^{P}:=\inf\{t:|P(t)|\geq 1/4\epsilon^2\},
\end{flalign*}
and $\tau_{\epsilon}:=\tau_{\epsilon}^{\Delta}\land \tau_{\epsilon}^{\Gamma}\land \tau_{\epsilon}^{N}\land \tau_{\epsilon}^{M}\land\tau_{\epsilon}^{P}$. Now noting that for a non-negative random variable $\zeta$ with $\Ebb[\zeta^{p}]<\infty$ and $p>0$ implies:
\begin{equation*}
\lim_{z\rightarrow\infty}z^{p}\Pbb(\zeta>z)=0,
\end{equation*}
we have:
\begin{flalign*}
&\lim_{\epsilon\rightarrow 0+}\frac{1}{\epsilon^4}\Pbb(\tau_{\epsilon}^{\Delta}\leq T)=0, \qquad
\lim_{\epsilon\rightarrow 0+}\frac{1}{\epsilon^4}\Pbb(\tau_{\epsilon}^{\Gamma}\leq T)=0, \\
&\lim_{\epsilon\rightarrow 0+}\frac{1}{\epsilon^4}\Pbb(\tau_{\epsilon}^{N}\leq T)=0, \qquad
\lim_{\epsilon\rightarrow 0+}\frac{1}{\epsilon^4}\Pbb(\tau_{\epsilon}^{M}\leq T)=0, \\
&\lim_{\epsilon\rightarrow 0+}\frac{1}{\epsilon^4}\Pbb(\tau_{\epsilon}^{P}\leq T)=0,
\end{flalign*}
since $\Delta,N,M\in\cH_{4}$ and $\Gamma,P\in\cH_2$. In particular the above also give:
\begin{equation}\label{eq:stopping_zero}
\lim_{\epsilon\rightarrow 0+}\frac{1}{\epsilon^4}\Pbb(\tau_{\epsilon}\leq T)=0.
\end{equation}
\underline{\textbf{The lower bound}}

Now note that:
\begin{equation*}
1+\epsilon (\Delta^{\tau_{\epsilon}}+M-M^{\tau_{\epsilon}})+\epsilon^2 \Gamma^{\tau_{\epsilon}}\in \cX(1,\epsilon).
\end{equation*}
In turn, Taylor expansion and the above give:
\begin{flalign*}
u(\epsilon)&\geq \Ebb\left[\ln\left(1+\epsilon\left(\Delta^{\tau_{\epsilon}}(T)+M(T)-M^{\tau_{\epsilon}}(T)-F(T)\right)+\epsilon^2 \Gamma^{\tau_{\epsilon}}(T)\right)\right] \\
&=-\epsilon\Ebb[F(T)]-\frac{1}{2}\Ebb\left[\left(\epsilon\left(\Delta^{\tau_{\epsilon}}(T)+M(T)-M^{\tau_{\epsilon}}(T)-F(T)\right)+\epsilon^2 \Gamma^{\tau_{\epsilon}}(T)\right)^2\right] \\
&+\frac{1}{3}\Ebb\left[\left(\epsilon\left(\Delta^{\tau_{\epsilon}}(T)+M(T)-M^{\tau_{\epsilon}}(T)-F(T)\right)+\epsilon^2 \Gamma^{\tau_{\epsilon}}(T)\right)^3\right] \\
&-\frac{1}{4}\Ebb\left[\left(\epsilon\left(\Delta^{\tau_{\epsilon}}(T)+M(T)-M^{\tau_{\epsilon}}(T)-F(T)\right)+\epsilon^2 \Gamma^{\tau_{\epsilon}}(T)\right)^4\xi_{\epsilon}^{L}\right],
\end{flalign*}
where $\xi_{\epsilon}^{L}$ is a r.v., s.t. $\lim_{\epsilon\rightarrow 0+}\xi_{\epsilon}^{L}=1$ a.s. and $0<\xi_{\epsilon}^{L}\leq 2^4$, $\forall\epsilon\in(0,\epsilon^{L})$. Hence, as it is bounded above, we can disregard it in the following estimates since it does not affect limits being zero as $\epsilon\rightarrow 0+$.

Now define $Q^{\epsilon}:= \Delta^{\tau_{\epsilon}}+M-M^{\tau_{\epsilon}}\in\cM_4$, where $Q^{\epsilon}(T)\rightarrow \Delta(T)$ almost surely as $\epsilon\rightarrow 0+$ (similarly $\Gamma^{\tau_{\epsilon}}(T)\rightarrow \Gamma(T)$ a.s.). Collecting terms, we get:
\begin{flalign*}
u(\epsilon) &\geq -\epsilon\Ebb[F(T)]-\epsilon^2\Ebb[(Q^{\epsilon}(T)-F(T))^2/2]+\epsilon^3\Ebb[(Q^{\epsilon}(T)-F(T))^3/3-\Gamma^{\tau_{\epsilon}}(T)(Q^{\epsilon}(T)-F(T))] \\
&-\epsilon^4\Ebb[(Q^{\epsilon}(T)-F(T))^4\xi_{\epsilon}^{L}/4-(Q^{\epsilon}(T)-F(T))^2\Gamma^{\tau_{\epsilon}}(T)+(\Gamma^{\tau_{\epsilon}}(T))^2/2] \\
&+\epsilon^5\Ebb[(Q^{\epsilon}(T)-F(T))(\Gamma^{\tau_{\epsilon}}(T))^2-(Q^{\epsilon}(T)-F(T))^3\Gamma^{\tau_{\epsilon}}(T)\xi_{\epsilon}^{L}] \\
&+\epsilon^6\Ebb[(\Gamma^{\tau_{\epsilon}}(T))^3/3-(3/2)(Q^{\epsilon}(T)-F(T))^2(\Gamma^{\tau_{\epsilon}})^2\xi_{\epsilon}^{L}]-\epsilon^7\Ebb[(Q^{\epsilon}(T)-F(T))(\Gamma^{\tau_{\epsilon}}(T))^3\xi_{\epsilon}^{L}] \\
&-\epsilon^8\Ebb[(\Gamma^{\tau_{\epsilon}}(T))^4\xi_{\epsilon}^{L}/4].
\end{flalign*}
At this point we use the following result: let $\zeta$ be a non-negative r.v. and assume that $\Ebb[\zeta^{p_1}]<\infty$ holds for some $p_1>0$. The, for all $p_2>0$:
\begin{equation}\label{eq:expectation_tail}
\lim_{z\rightarrow\infty}\frac{1}{z^{p_2}}\Ebb[\zeta^{p_1+p_2}\mathbf{1}_{\zeta\leq z}]=0.
\end{equation}
We examine the above terms (w.r.t. to $\epsilon$) one by one, beginning with the highest order.

\underline{Eighth order term}
\begin{flalign*}
\epsilon^4(\Gamma^{\tau_{\epsilon}}(T))^4&=\epsilon^4(\Gamma^{\tau_{\epsilon}}(T))^4\mathbf{1}_{\tau_{\epsilon}\leq T}+\epsilon^4(\Gamma^{\tau_{\epsilon}}(T))^4\mathbf{1}_{\tau_{\epsilon}=\infty} \\
&\leq \frac{1}{(6\epsilon)^4}\mathbf{1}_{\tau_{\epsilon}\leq T}+\epsilon^4(\overline{\Gamma}(T))^4\mathbf{1}_{\overline{\Gamma}(T)\leq 1/6\epsilon^2}.
\end{flalign*}
Hence:
\begin{flalign*}
\epsilon^4\Ebb[(\Gamma^{\tau_{\epsilon}}(T))^4]&\leq \frac{1}{(6\epsilon^4)}\Pbb(\tau_{\epsilon}\leq T)+\epsilon^4\Ebb\left[(\overline{\Gamma}(T))^4\mathbf{1}_{\overline{\Gamma}(T)\leq 1/6\epsilon^2}\right].
\end{flalign*}
In turn, $\lim_{\epsilon\rightarrow 0+}\epsilon^4\Ebb[(\Gamma^{\tau_{\epsilon}}(T))^4]=0$ follows by \eqref{eq:stopping_zero}, $\Gamma\in\cM_2$ and \eqref{eq:expectation_tail} after a simple change of variables.

\underline{Seventh order term}

Holder's inequality gives:
\begin{flalign*}
\epsilon^3\Ebb[|Q^{\epsilon}(T)-F(T)||\Gamma^{\tau_{\epsilon}}(T)|^3]&\leq \bigg(\Ebb\bigg[\bigg(\bigg|M(T)-F(T)\bigg|+\sup_{t\in[0,T]}\bigg|\Delta(t)-M(t)\bigg|\bigg)^4\bigg]\bigg)^{\frac{1}{4}}\left(\Ebb\left[(\epsilon \Gamma^{\tau_{\epsilon}}(T))^4\right]\right)^{\frac{3}{4}},
\end{flalign*}
where the first expectation is finite by \eqref{eq:ass_illiquid_stronger}, \eqref{eq:ass_fourth} and:
\begin{equation*}
\lim_{\epsilon\rightarrow 0+}\epsilon^4\Ebb[(\Gamma^{\tau_{\epsilon}}(T))^4]=0,
\end{equation*}
by the previous result on the eight order term.

\underline{Sixth order terms}

The first part follows similarly to the eight order term, since:
\begin{equation*}
\epsilon^2|\Gamma^{\tau_{\epsilon}}(T)|^3\leq \frac{1}{6^3\epsilon^4}\mathbf{1}_{\tau_{\epsilon}\leq T}+\epsilon^2(\overline{\Gamma}(T))^3\mathbf{1}_{\overline{\Gamma}(T)\leq 1/6\epsilon^2}.
\end{equation*}
Hence, as before, $\lim_{\epsilon\rightarrow 0+}\epsilon^2\Ebb[|\Gamma^{\tau_{\epsilon}}(T)|^3]=0$ follows by \eqref{eq:stopping_zero}, $\Gamma\in\cM_2$ and \eqref{eq:expectation_tail}.

For the second part note that by Holder's inequality we have:
\begin{flalign*}
\epsilon^2\Ebb[(Q^{\epsilon}(T)-F(T))^2(\Gamma^{\tau_{\epsilon}}(T))^2]&\leq \bigg(\Ebb\bigg[\bigg(\bigg|M(T)-F(T)\bigg|\\
&+\sup_{t\in[0,T]}\bigg|\Delta(t)-M(t)\bigg|\bigg)^4\bigg]\bigg)^{\frac{1}{2}}\left(\Ebb\left[(\epsilon \Gamma^{\tau_{\epsilon}}(T))^4\right]\right)^{\frac{1}{2}},
\end{flalign*}
where the first expectation is finite by \eqref{eq:ass_illiquid_stronger}, \eqref{eq:ass_fourth} and:
\begin{equation*}
\lim_{\epsilon\rightarrow 0+}\epsilon^4\Ebb[(\Gamma^{\tau_{\epsilon}}(T))^4]=0,
\end{equation*}
by the previous result on the eight order term.

\underline{Fifth order terms}

For the first part we have that:
\begin{flalign*}
\epsilon\Ebb[(Q^{\epsilon}(T)-F(T))(\Gamma^{\tau_{\epsilon}}(T))^2]&=\Ebb\left[\Ebb\left[(Q^{\epsilon}(T)-F(T))(\Gamma^{\tau_{\epsilon}}(T))^2\Big|\cF(T\land\tau_{\epsilon})\right]\right] \\
&=-\Ebb[\epsilon N^{\tau_{\epsilon}}(T)(\Gamma^{\tau_{\epsilon}}(T))^2].
\end{flalign*}
Now note that:
\begin{flalign*}
\Ebb[|\epsilon N^{\tau_{\epsilon}}(T)|(\Gamma^{\tau_{\epsilon}}(T))^2]&\leq\Ebb[|\epsilon N^{\tau_{\epsilon}}(T)|(\overline{\Gamma}(T))^2],
\end{flalign*}
where $4|\epsilon N^{\tau_{\epsilon}}(T)|\leq 1$. In turn, since $\lim_{\epsilon\rightarrow 0+}|\epsilon N^{\tau_{\epsilon}}(T)|=0$ a.s., we have:
\begin{equation*}
\lim_{\epsilon\rightarrow 0+}\epsilon\Ebb[(Q^{\epsilon}(T)-F(T))(\Gamma^{\tau_{\epsilon}}(T))^2]=0.
\end{equation*}

For the second part,
\begin{equation*}
\lim_{\epsilon\rightarrow 0+}\epsilon\Ebb[|Q^{\epsilon}(T)-F(T)|^3|\Gamma^{\tau_{\epsilon}}(T)|]=0,
\end{equation*}
follows from Holder's inequality and $\lim_{\epsilon\rightarrow 0+}\epsilon^4\Ebb[(\Gamma^{\tau_{\epsilon}}(T))^4]=0$, similarly to the sixth and seventh order terms.

\underline{Fourth order terms}

We have:
\begin{flalign*}
\lim_{\epsilon\rightarrow 0+}\Ebb[(\Gamma^{\tau_{\epsilon}}(T))^2]&=\Ebb[(\Gamma(T))^2], \\
\lim_{\epsilon\rightarrow 0+}\Ebb[(Q^{\epsilon}(T)-F(T))^2\Gamma^{\tau_{\epsilon}}(T)]&=\Ebb[(N(T))^2\Gamma(T)], \\
\lim_{\epsilon\rightarrow 0+}\Ebb[(Q^{\epsilon}(T)-F(T))^4\xi_{\epsilon}^{L}]&=\Ebb[(N(T))^4],
\end{flalign*}
which all hold by dominated convergence, \eqref{eq:ass_illiquid_stronger}, \eqref{eq:ass_fourth}, $0<\xi_{\epsilon}^{L}\leq 2^4$ and $\lim_{\epsilon\rightarrow 0+}\xi_{\epsilon}^{L}=1$ a.s.

\underline{Third order terms}

Initially note that:
\begin{flalign*}
\Ebb[\Gamma^{\tau_{\epsilon}}(T)(Q^{\epsilon}(T)-F(T))]&=\Ebb\left[\Ebb\left[\Gamma^{\tau_{\epsilon}}(T)(Q^{\epsilon}(T)-F(T))\Big|\cF(T\land\tau_{\epsilon})\right]\right]=0.
\end{flalign*}

For the second part we have by Holder's inequality:
\begin{flalign*}
\frac{1}{\epsilon}\Ebb[|(Q^{\epsilon}(T)-F(T))^3+(N(T))^3|]&=\frac{1}{\epsilon}\Ebb[|(Q^{\epsilon}(T)-F(T))^3+(N(T))^3|\mathbf{1}_{\tau_{\epsilon}\leq T}] \\
&\leq 2^{\frac{1}{4}}\bigg(\Ebb\bigg[\bigg(\bigg|M(T)-F(T)\bigg|\\
&+\sup_{t\in[0,T]}\bigg|\Delta(t)-M(t)\bigg|\bigg)^4+(N(T))^4\bigg]\bigg)^{\frac{3}{4}}\\
&\cdot\left(\frac{1}{\epsilon^4}\Pbb(\tau_{\epsilon}\leq T)\right)^{\frac{1}{4}},
\end{flalign*}
where the expectation on the last inequality is finite by \eqref{eq:ass_illiquid_stronger}, \eqref{eq:ass_fourth} and $\lim_{\epsilon\rightarrow 0+}(\epsilon^4)^{-1}\Pbb(\tau_{\epsilon}\leq T)=0$ by \eqref{eq:stopping_zero}. Hence the second part also tends to zero.

\underline{Second order term}

We have:
\begin{equation*}
\lim_{\epsilon\rightarrow 0+}\frac{1}{\epsilon^2}\Ebb[|(Q^{\epsilon}(T)-F(T))^2-(N(T))^2|]=0,
\end{equation*}
following the same process as for the third order term.

Lastly, combining all the above we have
\begin{flalign}\label{eq:fourth_value_lower}
&\uplim_{\epsilon\rightarrow 0+}\frac{1}{\epsilon^4}\Big(u(\epsilon)+\epsilon\Ebb[F(T)]+\frac{\epsilon^2}{2}\Ebb[(N(T))^2]+\frac{\epsilon^3}{3}\Ebb[(N(T))^3]+\frac{\epsilon^4}{4}\Ebb[(N(T))^4]\\
&+\epsilon^4\Ebb[(\Gamma(T))^2/2-(N(T))^2\Gamma(T)]\Big) \geq 0. \notag
\end{flalign}
\underline{\textbf{The upper bound}}

Note that by Legendre-Fenchel duality we have:
\begin{flalign*}
\ln(1+X^{\epsilon}(T)-\epsilon F(T)) &\leq -1 -\ln(1+\epsilon N^{\tau_{\epsilon}}(T)+\epsilon^2 P^{\tau_{\epsilon}}(T))\\
&+(1+\epsilon N^{\tau_{\epsilon}}(T)+\epsilon^2 P^{\tau_{\epsilon}}(T))(1+X^{\epsilon}(T)-\epsilon F(T)).
\end{flalign*}
Now, denoting the probability measure induced by $(1+\epsilon N^{\tau_{\epsilon}}(T)+\epsilon^2 P^{\tau_{\epsilon}}(T))/\Ebb[1+\epsilon N^{\tau_{\epsilon}}(T)+\epsilon^2 P^{\tau_{\epsilon}}(T)]$ as $\Qbb^{\epsilon}$ we claim that $1+X^{\epsilon}$ is a supermartingale under $\Qbb^{\epsilon}$ and therefore:
\begin{equation*}
\Ebb^{\Qbb^{\epsilon}}[1+X^{\epsilon}(T)]\leq 1.
\end{equation*}
Indeed, we have $1+X^{\epsilon}(T)-\epsilon F(T)>0$ and in particular from \eqref{eq:ass_illiquid_stronger} we get $\epsilon |F(T)|\leq \epsilon(x+M(T))$. It follows that $1+X^{\epsilon}+\epsilon(x+M)$ is a positive stochastic integral w.r.t. $S$. This, along with the fact that $N$, $P$ are strongly orthogonal to $S$ yields that $1+X^{\epsilon}+\epsilon(x+M)$ is a supermartingale under $\Qbb^{\epsilon}$. The claim now follows from the fact that $\epsilon(x+M)$ is a martingale under $\Qbb^{\epsilon}$. Hence we get:
\begin{flalign*}
\Ebb[(1+\epsilon N^{\tau_{\epsilon}}(T)+\epsilon^2 P^{\tau_{\epsilon}}(T))(1+X^{\epsilon}(T))]&=\Ebb[1+\epsilon N^{\tau_{\epsilon}}(T)+\epsilon^2 P^{\tau_{\epsilon}}(T)]\Ebb^{\Qbb^{\epsilon}}[1+X^{\epsilon}(T)]\\
&\leq \Ebb[1+\epsilon N^{\tau_{\epsilon}}(T)+\epsilon^2 P^{\tau_{\epsilon}}(T)].
\end{flalign*}
Furthermore we have:
\begin{flalign*}
&\Ebb[N^{\tau_{\epsilon}}(T)F(T)]=\Ebb[N^{\tau_{\epsilon}}(T)N(T)]=\Ebb[(N^{\tau_{\epsilon}}(T))^2], \\
&\Ebb[P^{\tau_{\epsilon}}(T)F(T)]=\Ebb[P^{\tau_{\epsilon}}(T)N(T)]=\Ebb[P^{\tau_{\epsilon}}(T)N^{\tau_{\epsilon}}(T)],
\end{flalign*}
given the definitions and properties of $N$, $P$. It follows that:
\begin{flalign*}
u(\epsilon)&\leq -\epsilon\Ebb[F(T)]-\epsilon^2\Ebb[(N^{\tau_{\epsilon}}(T))^2]-\epsilon^3\Ebb[P^{\tau_{\epsilon}}(T)N^{\tau_{\epsilon}}(T)]\\
&+\Ebb[-\ln(1+\epsilon N^{\tau_{\epsilon}}(T)+\epsilon^2 P^{\tau_{\epsilon}}(T))+\epsilon N^{\tau_{\epsilon}}(T)+\epsilon^2 P^{\tau_{\epsilon}}(T)].
\end{flalign*}
In turn, Taylor expansion yields:
\begin{flalign*}
\Ebb[-\ln(1+\epsilon N^{\tau_{\epsilon}}(T)+\epsilon^2 P^{\tau_{\epsilon}}(T))+\epsilon N^{\tau_{\epsilon}}(T)+\epsilon^2 P^{\tau_{\epsilon}}(T)]&=\frac{1}{2}\epsilon^2\Ebb[(N^{\tau_{\epsilon}}(T)+\epsilon P^{\tau_{\epsilon}}(T))^2]\\
&-\frac{1}{3}\epsilon^3\Ebb[(N^{\tau_{\epsilon}}(T)+\epsilon P^{\tau_{\epsilon}}(T))^3] \\
&+\frac{1}{4}\epsilon^4\Ebb[(N^{\tau_{\epsilon}}(T)+\epsilon P^{\tau_{\epsilon}}(T))^4\xi_{\epsilon}^{U}],
\end{flalign*}
where the random variables $\xi_{\epsilon}^{U}$ satisfy:
\begin{equation*}
0<\xi_{\epsilon}^{U}\leq 2^4, \ \forall\epsilon\in(0,\epsilon^{L}); \qquad \text{$\lim_{\epsilon\rightarrow 0+}\xi_{\epsilon}^{U}=1$ a.s.}
\end{equation*}
Note that we can once more disregard $\xi_{\epsilon}^{U}$ from the estimates, similarly to $\xi_{\epsilon}^{L}$, since it is bounded above by $2^4$. Collecting terms of same order, we obtain:
\begin{flalign*}
\Ebb[-\ln(1+\epsilon N^{\tau_{\epsilon}}(T)+\epsilon^2 P^{\tau_{\epsilon}}(T))+\epsilon N^{\tau_{\epsilon}}(T)+\epsilon^2 P^{\tau_{\epsilon}}(T)]&=\epsilon^2 \Ebb\left[\frac{(N^{\tau_{\epsilon}}(T))^2}{2}\right] \\
&+\epsilon^3\Ebb\left[N^{\tau_{\epsilon}}(T)P^{\tau_{\epsilon}}(T)-\frac{(N^{\tau_{\epsilon}}(T))^3}{3}\right] \\
&+\epsilon^4\Ebb\bigg[\frac{(P^{\tau_{\epsilon}}(T))^2}{2}-(N^{\tau_{\epsilon}}(T))^2P^{\tau_{\epsilon}}(T)\\
&+\frac{(N^{\tau_{\epsilon}}(T))^4}{4}\xi_{\epsilon}^{U}\bigg]\\
&+\epsilon^5\Ebb[-N^{\tau_{\epsilon}}(T)(P^{\tau_{\epsilon}}(T))^2\\
&+(N^{\tau_{\epsilon}}(T))^3P^{\tau_{\epsilon}}(T)\xi_{\epsilon}^{U}] \\
&\epsilon^6\Ebb\left[-\frac{(P^{\tau_{\epsilon}}(T))^3}{3}+\frac{3}{2}(P^{\tau_{\epsilon}}(T))^2(N^{\tau_{\epsilon}}(T))^2\xi_{\epsilon}^{U}\right] \\
&+\epsilon^7\Ebb[N^{\tau_{\epsilon}}(T)(P^{\tau_{\epsilon}}(T))^3\xi_{\epsilon}^{U}]\\
&+\epsilon^8\Ebb\left[\frac{(P^{\tau_{\epsilon}}(T))^4}{4}\xi_{\epsilon}^{U}\right]
\end{flalign*}
We examine the above terms (w.r.t. to $\epsilon$) one by one, beginning with the highest order.

\underline{Eighth order term}

We claim that:
\begin{equation*}
\lim_{\rightarrow 0+}\epsilon^4\Ebb[(P^{\tau_{\epsilon}}(T))^4]=0.
\end{equation*}
To see this note that since $4\epsilon^2|P^{\tau_{\epsilon}}(T)|\leq 1$:
\begin{equation*}
\epsilon^4(P^{\tau_{\epsilon}}(T))^4\leq \frac{1}{(4\epsilon)^4}\mathbf{1}_{\tau_{\epsilon}\leq T}+\epsilon^4(\overline{P}(T))^4\mathbf{1}_{\overline{P}(T)\leq (4\epsilon^2)^{-1}}.
\end{equation*}
Since $\lim_{\epsilon\rightarrow 0+}\epsilon^{-4}\Pbb(\tau_{\epsilon}<\infty)=0$ by \eqref{eq:stopping_zero}, we only have to show:
\begin{equation*}
\lim_{z\rightarrow\infty}\frac{1}{z^2}\Ebb[(\overline{P}(T))^4\mathbf{1}_{\overline{P}(T)\leq z}]=0,
\end{equation*}
which follows from \eqref{eq:expectation_tail}, similarly to the way we handled the eighth order term in the lower bound of the value function.

\underline{Seventh order term}

We show that:
\begin{equation*}
\lim_{\epsilon\rightarrow 0+}\epsilon^3\Ebb[|N^{\tau_{\epsilon}}(T)||P^{\tau_{\epsilon}}(T)|^3]=0.
\end{equation*}
This follows from Holder's inequality as:
\begin{equation*}
\epsilon^3\Ebb[|N^{\tau_{\epsilon}}(T)||P^{\tau_{\epsilon}}(T)|^3]\leq (\Ebb[(\overline{N}(T))^4]^{\frac{1}{4}}(\Ebb[\epsilon^4(P^{\tau_{\epsilon}}(T))^4])^{\frac{3}{4}},
\end{equation*}
and using \eqref{eq:ass_fourth} as well as $\lim_{\epsilon\rightarrow 0+}\epsilon^{-4}\Pbb(\tau_{\epsilon}<\infty)=0$.

\underline{Sixth order terms}

For the first part we have:
\begin{equation*}
\lim_{\epsilon\rightarrow 0+}\epsilon^2\Ebb[|P^{\tau_{\epsilon}}(T)|^3]=0,
\end{equation*}
which follows similarly to the eighth order term of the upper bound since:
\begin{equation*}
\epsilon^2|P^{\tau_{\epsilon}}(T)|^3\leq \frac{1}{4^3\epsilon^4}\mathbf{1}_{\tau_{\epsilon\leq T}}+\epsilon^2(\overline{P}(T))^3\mathbf{1}_{\overline{P}(T)\leq (4\epsilon^2)^{-1}}.
\end{equation*}
We also have that:
\begin{equation*}
\lim_{\epsilon\rightarrow 0+}\epsilon^2\Ebb[(N^{\tau_{\epsilon}}(T))^2(P^{\tau_{\epsilon}}(T))^2]=0.
\end{equation*}
Indeed, this follows from dominated convergence, as:
\begin{equation*}
\epsilon^2\Ebb[(N^{\tau_{\epsilon}}(T))^2(P^{\tau_{\epsilon}}(T))^2]\leq\Ebb[(\epsilon N^{\tau_{\epsilon}}(T))^2(\overline{P}(T))^2,
\end{equation*}
and $4|\epsilon N^{\tau_{\epsilon}}(T)|\leq 1$, $N\in\cH_4$ (implying $P\in\cH_2$) as well as $\lim_{\epsilon\rightarrow 0+}|\epsilon N^{\tau_{\epsilon}}(T)|=0$.

\underline{Fifth order terms}

For the first part we have that:
\begin{equation*}
\lim_{\epsilon\rightarrow 0+}\epsilon\Ebb[|N^{\tau_{\epsilon}}(T)|(P^{\tau_{\epsilon}}(T))^2]=0,
\end{equation*}
follows similarly to the sixth order terms of the upper bound, while:
\begin{equation*}
\lim_{\epsilon\rightarrow 0+}\epsilon\Ebb[|N^{\tau_{\epsilon}}(T)|^3|P^{\tau_{\epsilon}}(T)|]=0,
\end{equation*}
follows from Holder's inequality since:
\begin{equation*}
\epsilon\Ebb[|N^{\tau_{\epsilon}}(T)|^3|P^{\tau_{\epsilon}}(T)|]\leq \Ebb[(\overline{N}(T))^4]^{\frac{3}{4}}\left(\epsilon^4\Ebb[(P^{\tau_{\epsilon}}(T))^4]\right)^{\frac{1}{4}},
\end{equation*}
and using the fact that $\lim_{\rightarrow 0+}\epsilon^4\Ebb[(P^{\tau_{\epsilon}}(T))^4]=0$.

\underline{Fourth order terms}

We have that:
\begin{flalign*}
&\lim_{\epsilon\rightarrow 0+}\Ebb[(P^{\tau_{\epsilon}}(T))^2]=\Ebb[(P(T))^2], \\
&\lim_{\epsilon\rightarrow 0+}\Ebb[\xi_{\epsilon}^{U}(N^{\tau_{\epsilon}}(T))^4]=\Ebb[(N(T))^4], \\
&\lim_{\epsilon\rightarrow 0+}\Ebb[(N^{\tau_{\epsilon}}(T))^2P^{\tau_{\epsilon}}(T)]=\Ebb[(N(T))^2P(T)],
\end{flalign*}
which all hold by dominated convergence, \eqref{eq:ass_illiquid_stronger}, \eqref{eq:ass_fourth}, $0<\xi_{\epsilon}^{U}\leq 2^4$ and $\lim_{\epsilon\rightarrow 0+}\xi_{\epsilon}^{U}=1$ a.s.

\underline{Third order term}

Holder's inequality gives:
\begin{equation*}
\frac{1}{\epsilon}\Ebb[|(N^{\tau_{\epsilon}}(T))^3-(N(T))^3|]\leq 2(\Ebb[(\overline{N}(T))^4])^{\frac{3}{4}}\left(\frac{1}{\epsilon^4}\Pbb(\tau_{\epsilon}\leq T)\right)^{\frac{1}{4}},
\end{equation*}
which goes to zero by \eqref{eq:stopping_zero}.

\underline{Second order term}

We have:
\begin{equation*}
\frac{1}{\epsilon^2}\Ebb[|(N^{\tau_{\epsilon}}(T))^2-(N(T))^2|]\leq 2(\Ebb[(\overline{N}(T))^4])^{\frac{1}{2}}\left(\frac{1}{\epsilon^4}\Pbb(\tau_{\epsilon}\leq T)\right)^{\frac{1}{2}},
\end{equation*}
which goes to zero, similarly to the third order case for the upper bound.

Combining all the above and after some algebra we have
\begin{flalign}\label{eq:fourth_value_upper}
&\uplim_{\epsilon\rightarrow 0+}\frac{1}{\epsilon^4}\Big(u(\epsilon)+\epsilon\Ebb[F(T)]+\frac{\epsilon^2}{2}\Ebb[(N(T))^2]+\frac{\epsilon^3}{3}\Ebb[(N(T))^3]+\frac{\epsilon^4}{4}\Ebb[(N(T))^4]\\
&+\epsilon^4\Ebb[(\Gamma(T))^2/2-(N(T))^2\Gamma(T)]\Big) \leq 0. \notag
\end{flalign}
Lastly, noting that \eqref{eq:fourth_value_lower} and \eqref{eq:fourth_value_upper} also hold for limit inferior as $\epsilon\rightarrow 0+$ concludes the proof.
\end{proof}

\section{Second order expansion of optimal wealth w.r.t. $\epsilon$}

\begin{theorem}\label{the:second_wealth}
Assume the same conditions as in Theorem \ref{the:fourth_value}; then:
\begin{equation}\label{eq:second_wealth}
 X^{\epsilon}(T)-\epsilon\Delta(T)-\epsilon^2\Gamma(T)=o_{\Pbb}(\epsilon^2) \ \footnotemark[3], 
\end{equation}
as $\epsilon\rightarrow 0+$.
\end{theorem}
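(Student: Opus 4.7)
The plan is to exhibit an explicit candidate wealth process whose expected log-utility matches $u(\epsilon)$ up to $o(\epsilon^4)$, exploit strict concavity of the logarithm to force pointwise closeness of $X^\epsilon$ to that candidate, and finally identify the candidate with $\epsilon\Delta + \epsilon^2\Gamma$ on a set of $\Pbb$-probability $1 - o(\epsilon^4)$. Concretely, I would reuse the stopping time $\tau_\epsilon$ from the proof of Theorem~\ref{the:fourth_value} and set $Y^\epsilon := \epsilon\Delta^{\tau_\epsilon} + \epsilon(M - M^{\tau_\epsilon}) + \epsilon^2\Gamma^{\tau_\epsilon}$. The lower-bound analysis there already establishes $1 + Y^\epsilon \in \cX(1,\epsilon)$ and, via Taylor expansion, shows the same fourth-order expansion for $v^\epsilon := \Ebb[\ln(1 + Y^\epsilon(T) - \epsilon F(T))]$ as Theorem~\ref{the:fourth_value} yields for $u(\epsilon)$; combined with the trivial inequality $v^\epsilon \leq u(\epsilon)$, this gives $0 \leq u(\epsilon) - v^\epsilon = o(\epsilon^4)$.

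Next, write $a^\epsilon := 1 + X^\epsilon(T) - \epsilon F(T)$, $b^\epsilon := 1 + Y^\epsilon(T) - \epsilon F(T)$, and $\rho^\epsilon := (a^\epsilon - b^\epsilon)/(a^\epsilon + b^\epsilon) \in (-1,1)$. Convexity of $\cX(1,\epsilon)$ places the midpoint process there, so $u(\epsilon) \geq \Ebb[\ln((a^\epsilon + b^\epsilon)/2)]$; the elementary identity $\ln((a+b)/2) - (\ln a + \ln b)/2 = -\tfrac{1}{2}\ln(1 - \rho^2)$, together with $-\ln(1-x) \geq x$ on $[0,1)$, delivers
\[
\tfrac{1}{2}\Ebb[(\rho^\epsilon)^2] \leq u(\epsilon) - v^\epsilon = o(\epsilon^4),
\]
so $\rho^\epsilon$ vanishes in $L^2$ faster than $\epsilon^2$. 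Since $X^\epsilon(T) - Y^\epsilon(T) = \rho^\epsilon(a^\epsilon + b^\epsilon)$, the remaining task is tightness of $\{a^\epsilon + b^\epsilon\}_{\epsilon \in (0,\epsilon^L)}$. On $\{\tau_\epsilon > T\}$ the construction of $\tau_\epsilon$ forces $|\epsilon\Delta(T)|, |\epsilon^2\Gamma(T)|, |\epsilon F(T)| \leq 1/6$, whence $b^\epsilon \in [1/2, 3/2]$; for $a^\epsilon$, observe that
\[
1 + X^\epsilon + \epsilon(x + M) = (1 + X^\epsilon - \epsilon F) + \epsilon(x + M + F)
\]
is a sum of two non-negative processes and therefore a positive $S$-stochastic integral starting at $1 + \epsilon x$, hence a $\Pbb$-supermartingale. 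Taking expectations at $T$ and using $\Ebb[M(T)] = 0$ (as $M \in \cM_4$ is uniformly integrable) yields $\Ebb[a^\epsilon] \leq 1 + \epsilon|\Ebb[F(T)]|$, so $a^\epsilon$ is tight by Markov.

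Splitting $\Pbb(|X^\epsilon(T) - Y^\epsilon(T)| > \delta\epsilon^2) \leq \Pbb(a^\epsilon + b^\epsilon > K) + K^2\Ebb[(\rho^\epsilon)^2]/(\delta^2\epsilon^4)$ and letting first $\epsilon \to 0+$, then $K \to \infty$, gives $X^\epsilon(T) - Y^\epsilon(T) = o_\Pbb(\epsilon^2)$. On $\{\tau_\epsilon > T\}$ all stopped processes coincide with the originals, so $Y^\epsilon(T) = \epsilon\Delta(T) + \epsilon^2\Gamma(T)$ exactly; since $\Pbb(\tau_\epsilon \leq T) = o(\epsilon^4)$ by \eqref{eq:stopping_zero}, one has $Y^\epsilon(T) - \epsilon\Delta(T) - \epsilon^2\Gamma(T) = o_\Pbb(\epsilon^2)$, and combining gives \eqref{eq:second_wealth}. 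The main obstacle is the transfer from $L^2$ smallness of $\rho^\epsilon$ to absolute smallness of $a^\epsilon - b^\epsilon$: unlike $b^\epsilon$ the optimal wealth $X^\epsilon$ has no a priori pointwise bound, so tightness of $a^\epsilon$ must be extracted from the admissibility structure of the problem, concretely from the supermartingale property obtained by pairing $X^\epsilon$ with the martingale $M$ dominating $|F(T)|$.
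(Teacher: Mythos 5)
Your proposal is correct, and while it starts from the same ingredients as the paper (the candidate $1+Y^\epsilon$ with $Y^\epsilon=\epsilon(\Delta^{\tau_\epsilon}+M-M^{\tau_\epsilon})+\epsilon^2\Gamma^{\tau_\epsilon}$, membership in $\cX(1,\epsilon)$, and the fact that its expected log matches $u(\epsilon)$ up to $o(\epsilon^4)$), the mechanism converting the $o(\epsilon^4)$ value gap into closeness of the wealths is genuinely different. The paper Taylor-expands the logarithm of the candidate around the optimal terminal wealth, kills the first-order term with the variational inequality $\Ebb\left[(1+Y^\epsilon(T)-\epsilon F(T))/(1+X^\epsilon(T)-\epsilon F(T))\right]\leq 1$, and bounds the quadratic remainder below by $(X^\epsilon(T)-Y^\epsilon(T))^2/(2B^2)$ with a \emph{random} $B$ built from $\sup_m X^{\epsilon_m}(T)$ along an a.s.-convergent subsequence; the double-subsequence trick is needed precisely to make $B$ a.s. finite, and the final "Markov inequality for $x\mapsto x^2/2B^2$" step is somewhat delicate for that reason. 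You instead exploit strict concavity through the exact midpoint identity $\ln\tfrac{a+b}{2}-\tfrac{1}{2}(\ln a+\ln b)=-\tfrac{1}{2}\ln(1-\rho^2)$ and the optimality of $X^\epsilon$ against the convex combination, obtaining $\Ebb[(\rho^\epsilon)^2]=o(\epsilon^4)$ with no subsequences and no Taylor remainder; the price is a separate tightness argument for $a^\epsilon+b^\epsilon$ to pass from the relative error $\rho^\epsilon$ to the absolute error $X^\epsilon(T)-Y^\epsilon(T)$, which you supply via the supermartingale bound $\Ebb[1+X^\epsilon(T)]\leq 1$ and the pathwise bounds on $b^\epsilon$ on $\{\tau_\epsilon>T\}$. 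Both routes work; yours is arguably the more transparent of the two. One small imprecision: your justification that $1+X^\epsilon+\epsilon(x+M)$ is a positive supermartingale ("a sum of two non-negative processes") does not hold as stated, since $1+X^\epsilon-\epsilon F$ and $x+M+F$ are only known to be nonnegative at the terminal time, not pathwise; but the conclusion you need is exactly the assertion established in the upper-bound part of the proof of Theorem \ref{the:fourth_value} (positivity of the terminal value plus acceptability forces the whole stochastic integral to be a nonnegative supermartingale), so you should simply cite that step rather than re-derive it.
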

\begin{proof}
Take any sequence $\epsilon_{n}\in(0,\epsilon^{L})$ s.t. $\lim_{n\rightarrow \infty}\epsilon_{n}=0$ \footnote{This should come without loss of generality for our purposes since any positive sequence $\epsilon_{n}$, that tends to zero, eventually lies in $(0,\epsilon^{L})$.}. In turn, recalling the process we were considering for the fourth order lower bound of the value function, i.e. $Q^{\epsilon_{n}}$ and Taylor expanding $U(1+\epsilon_{n} Q^{\epsilon_{n}}(T)+\epsilon_{n}^2\Gamma^{\tau_{\epsilon_{n}}}(T)-\epsilon_{n}F(T))$ around $X^{\epsilon_{n}}(T)-\epsilon_{n} F(T)$ we get:
\begin{equation}\label{eq:taylor}
\begin{aligned}
&\ln(1+\epsilon_{n} Q^{\epsilon_{n}}(T)+\epsilon_{n}^2\Gamma^{\tau_{\epsilon_{n}}}(T)-\epsilon_{n}F(T))-\ln(1+X^{\epsilon_{n}}(T)-\epsilon_{n} F(T))= \\
&\frac{1}{1+X^{\epsilon_{n}}(T)-\epsilon_{n} F(T)}\left(\left(1+\epsilon_{n} Q^{\epsilon_{n}}(T)+\epsilon_{n}^2\Gamma^{\tau_{\epsilon_{n}}}(T)-\epsilon_{n}F(T)\right)-\left(1+X^{T,\epsilon_{n}}(T)-\epsilon_{n} F(T)\right)\right) \\
&-\frac{1}{2(1+\xi_{n})^2}\left(X^{\epsilon_{n}}(T)-\epsilon_{n}Q^{\epsilon_{n}}(T)-\epsilon_{n}^2 \Gamma^{\tau_{\epsilon_{n}}}(T)\right)^2,
\end{aligned}
\end{equation}
where $\xi_{n}$ is a r.v. between $X^{\epsilon_{n}}(T)-\epsilon_{n} F(T)$ and $\epsilon_{n} Q^{\epsilon_{n}}(T)+\epsilon_{n}^2\Gamma^{\tau_{\epsilon_{n}}}(T)-\epsilon_{n}F(T)$ s.t. $\lim_{n\rightarrow\infty}\xi_{n}=0$ in probability. This holds by the fact that $\lim_{n\rightarrow\infty}X^{\epsilon_{n}}(T)=0$ in probability (see \cite[Lemma 3.6]{KS99}, \cite[Theorem 1]{KS06} for similar results that cover the case of log-utility). Passing to any subsequence of $\epsilon_{n}$ we can always find a further subsequence, denoted by $\epsilon_{m}$, s.t. $\lim_{m\rightarrow\infty}X^{\epsilon_{m}}(T)=0$ a.s., which in turn implies that the r.v. $\sup_{m}X^{\epsilon_{m}}(T)$ is well-defined a.s. \footnote{In fact note that if \eqref{eq:second_wealth} holds for the subsequence $\epsilon_{m}$, then it also holds for the original (arbitrary) sequence $\epsilon_{n}$ by the double subsequence trick.}.

Now note that:
\begin{equation*}
0<1+\xi_{m}\leq\left(2+\sup_{m}X^{\epsilon_{m}}(T)+|F(T)|\right)\lor\left(2+\overline{Q}^{\epsilon}(T)+\overline{\Gamma}(T)+|F(T)|\right)=:B,
\end{equation*}
which implies that $-1/2(1+\xi_{m})^2\leq -1/2B^2$ and in particular we have $1/B \leq 1$. Hence \eqref{eq:taylor} becomes:
\begin{equation*}
\begin{aligned}
&\ln(1+\epsilon_{m} Q^{\epsilon_{m}}(T)+\epsilon_{m}^2\Gamma^{\tau_{\epsilon_{m}}}(T)-\epsilon_{m}F(T))-\ln(1+X^{\epsilon_{m}}(T)-\epsilon_{m} F(T))\leq \\
&\frac{1}{1+X^{\epsilon_{m}}(T)-\epsilon_{m} F(T)}\left(\left(1+\epsilon_{m} Q^{\epsilon_{m}}(T)+\epsilon_{m}^2\Gamma^{\tau_{\epsilon_{m}}}(T)-\epsilon_{m}F(T)\right)-\left(1+X^{\epsilon_{m}}(T)-\epsilon_{m} F(T)\right)\right) \\
&-\frac{1}{2B^2}\left(X^{\epsilon_{m}}(T)-\epsilon_{m}Q^{\epsilon_{m}}(T)-\epsilon_{m}^2 \Gamma^{\tau_{\epsilon_{m}}}(T)\right)^2.
\end{aligned}
\end{equation*}
Now we claim that:
\begin{equation*}
\Ebb\left[\frac{1+\epsilon_{m} Q^{\epsilon_{m}}(T)+\epsilon_{m}^2\Gamma^{\tau_{\epsilon_{m}}}(T)-\epsilon_{m}F(T)}{1+X^{\epsilon_{m}}(T)-\epsilon_{m} F(T)}\right]\leq 1.
\end{equation*}
This holds since $1+\epsilon_{m} Q^{\epsilon_{m}}+\epsilon_{m}^2\Gamma^{\tau_{\epsilon_{m}}}\in\cX(1,\epsilon)$. Hence:
\begin{equation*}
\Ebb\left[\ln\left(\frac{1+\epsilon_{m} Q^{\epsilon_{m}}(T)+\epsilon_{m}^2\Gamma^{\tau_{\epsilon_{m}}}(T)-\epsilon_{m}F(T)}{1+X^{\epsilon_{m}}(T)-\epsilon_{m} F(T)}\right)\right]\leq 0.
\end{equation*}
Using Jensen's inequality, the claim follows. In turn we have:
\begin{equation}\label{eq:taylor_value}
\frac{u(\epsilon_{m})-\Ebb[\ln(1+\epsilon_{m} Q^{\epsilon_{m}}(T)+\epsilon_{m}^2\Gamma^{\tau_{\epsilon_{m}}}(T)-\epsilon_{m}F(T))]}{\epsilon_{m}^4}\geq \frac{\Ebb[(X^{\epsilon_{m}}(T)-\epsilon_{m}Q^{\epsilon_{m}}(T)-\epsilon_{m}^2 \Gamma^{\tau_{\epsilon_{m}}}(T))^2/2B^2]}{\epsilon_{m}^4},
\end{equation}
where by the fourth order expansion of the value function we have that the left-hand side of \eqref{eq:taylor_value} tends to zero as $m\rightarrow\infty$. Furthermore, we have:
\begin{flalign*}
\frac{\Ebb[(X^{\epsilon_{m}}(T)-\epsilon_{m}\Delta(T)-\epsilon_{m}^2 \Gamma(T))^2/2B^2]}{\epsilon_{m}^4}&\leq \frac{2\Ebb[(X^{\epsilon_{m}}(T)-\epsilon_{m}Q^{\epsilon_{m}}(T)-\epsilon_{m}^2 \Gamma^{\tau_{\epsilon_{m}}}(T))^2/2B^2]}{\epsilon_{m}^4} \\
&+\frac{2\Ebb[(\epsilon_{m}Q^{\epsilon_{m}}(T)+\epsilon_{m}^2 \Gamma^{\tau_{\epsilon_{m}}}(T)-\epsilon_{m}\Delta(T)-\epsilon_{m}^2\Gamma(T))^2]}{\epsilon_{m}^4} \\
&\leq \frac{2\Ebb[(X^{\epsilon_{m}}(T)-\epsilon_{m}Q^{\epsilon_{m}}(T)-\epsilon_{m}^2 \Gamma^{\tau_{\epsilon_{m}}}(T))^2/2B^2]}{\epsilon_{m}^4} \\
&+4\Ebb[(\Gamma^{\tau_{\epsilon_{m}}}(T)-\Gamma(T))^4] \\
&+C\left(\frac{\Pbb(\tau_{\epsilon_{m}}\leq T)}{\epsilon_{m}^4}\right)^{\frac{1}{2}},
\end{flalign*}
for $C>0$. Now the above tends to zero as $m\rightarrow\infty$ by \eqref{eq:taylor_value}, dominated convergence, \eqref{eq:ass_illiquid_stronger}, \eqref{eq:ass_fourth} and \eqref{eq:stopping_zero}. Lastly, Markov's inequality applied on the increasing, positive non-negative function $x\mapsto x^2/2B^2$ gives the desired result.
\end{proof}

\section{Discussion and an example}
Before considering a concrete example, let us comment on the usefulness of the results derived in the previous sections. In particular, besides the obvious increase of accuracy in $X^{\epsilon}(T)$'s approximation by also considering $\Gamma(T)$; there is another arguably more important upside. One that provides some added intuition on the way the investor behaves optimally in an incomplete v.s. complete market setting. In order to underline this we claim that $X^{\epsilon}(T)=\epsilon \Delta(T)$ in a complete market setting, i.e. the first order approximation discussed in \S 2-3 is actually optimal. Indeed, in this case we have that $\Delta(T)=F(T)-\Ebb[F(T)]$, since $N=\Ebb[F(T)]$. This implies that $1+\epsilon\Delta(T)-\epsilon F(T)=1-\epsilon\Ebb[F(T)]$. But then, we get for sufficiently small $\epsilon$ s.t. $1-\epsilon\Ebb[F(T)]>0$ and $\forall X\in\cX(1,\epsilon)$:
\begin{equation*}
\Ebb\left[\frac{X(T)-1-X^{\epsilon}(T)}{1+X^{\epsilon}(T)-\epsilon F(T)}\right]=\Ebb\left[\frac{X(T)-1-\epsilon \Delta(T)}{1-\epsilon\Ebb[F(T)]}\right]=\frac{\Ebb[X(T)]-1}{1-\epsilon\Ebb[F(T)]}\leq 0,
\end{equation*}
which shows that $X^{\epsilon}(T)=\epsilon\Delta(T)$ in view of the first order conditions. Hence, the term $\epsilon^2\Gamma(T)$ in the approximation of $X^{\epsilon}(T)$ allows us to get a sense of how market incompleteness directly affects optimal behavior in a context where a non-traded endowment is present.

Secondly, the fact that the model accommodates infinite horizon markets is a non-trivial addition. In order to better grasp the added benefit of such a result note that while we have a characterization of $\Delta$, $\Gamma$; their dependence of the maturity makes explicit results not possible in many cases. On the other hand the ``myopic case" of $T=\infty$ provides a more tractable tool to approximate optimal behavior. Informally, the message being that in the case of distant horizons asset allocation can be further approximated by the projections of $\Ebb[F(\infty)|\cF(\cdot)]$ and $\Ebb[(N(\infty))^2|\cF(\cdot)]$ on the space of $S$-integrals. Concretely, let us consider a space that supports a two dimensional standard Brownian motion $(W^1,W^2)$, where the risky asset is driven by $\int_0^{\cdot}\si(t)dW^1(t)$, and factor process $Z$:
\begin{equation*}
Z=Z(0)+\int_0^{\cdot}\m(Z(t))dt+\int_0^{\cdot}\ka(Z(t))dB(t),
\end{equation*}
where $\beta:=\ka^2$ and $B:=\rho W^1+\sqrt{1-\rho^2}W^2, \ \rho\in(-1,1)$. With a slight abuse of notation, we assume that all $R_0$, $a$, $\si$, $\la$ are functions of the factor ($c:=\si^2$). Explicit calculations regarding $\Delta^{\infty}$ require us to have a sense of the following form:
\begin{equation*}
\Ebb\left[\int_0^{\infty}\nums(s)\la(s)/S_0(s)ds \bigg|\cF(t)\right]=\int_0^{t}\nums(s)\la(s)/S_0(s)ds+\nums(t)\psi(Z(t))/S_0(t),
\end{equation*}
where $\psi$ satisfies the following second order ODE:
\begin{equation*}
\la-R_0\psi+(\m-\num\si\ka\rho)\partial_z\psi+\frac{1}{2}\beta\partial_{zz}\psi=0.
\end{equation*}
On the contrary, similar calculations for finite horizons would require us to solve a PDE, since a time variable would also be present.

\begin{exam}[Linear payoff on a factor model and an infinite horizon]
In a market with one riskless and one risky asset as well as $T=\infty$, we consider a standard two dimensional Brownian motion $(W^1,W^2)$ generating the augmented filtration $\cF(\cdot)$, and formulate the following problem for $B:=\rho W^1+\sqrt{1-\rho^2}W^2, \ \rho\in(-1,1)$:
\begin{flalign*}
&\La=\int_0^{\cdot}e^{-rt}Z(t)dt, \qquad r>0, \\
&dZ(t)=k(\theta-Z(t))dt+bdB(t), \qquad k,b>0; \ Z(0),\theta\in\Rbb, \\
&dR(t)=adt+\si dW^1(t), \qquad a\in\Rbb; \ \si>0, \\
&d\wt{S}(t)=\wt{S}(t)dR(t).
\end{flalign*}
In turn, in the above model, the portfolio that gives rise to the supermartingale numeraire is constant and in particular we have:
\begin{flalign*}
&\frac{dS_0(t)}{S_0(t)}=\overbrace{-\num\si dW^1(t)}^{d\numr(t)}, \\
&F=\int_0^{\cdot}e^{-rt}S_0(t)Z(t)dt.
\end{flalign*}
Note that even if a non-traded (local) endowment $\la$ given as a linear contract on the factor $Z$ isn't sub/super replicable, the continuity (and adaptedness) of the factor process give that $Z$ is locally bounded. Hence the above are still relevant. In fact define $\tau_{n}:=\inf\{t:|Z(t)|\geq n\}$ (for big enough $n$ s.t. it dominates $Z(0)$). Then $F_{n}:=\int_0^{\cdot}e^{-rt}S_0(t)Z^{\tau_{n}}(t)dt$ satisfies the sub/super replicability conditions for all $n$, since $Z^{\tau_{n}}$ is (uniformly) bounded. Denote the martingales closed by $F(\infty)$, $F_{n}(\infty)$ as $M$ and $M_{n}$ respectively and recall that we have the following Kunita-Watanabe decompositions, as in \S 2:
\begin{flalign*}
M&=\Delta+N, \\
Q&:=\Ebb[(N(\infty))^2|\cF(\cdot)]=\Gamma+P, \\
M_{n}&=\Delta_{n}+N_{n}, \\
Q_{n}&:=\Ebb[(N_{n}(\infty))^2|\cF(\cdot)]=\Gamma_{n}+P_{n},
\end{flalign*}
where $\Delta(0)=\Gamma(0)=\Delta_{n}(0)=\Gamma_{n}(0)=0$. For sufficiently big $r$, we get:
\begin{equation*}
\|[M_{n}-M](\infty)\|_{\cL_1}\rightarrow 0,
\end{equation*}
as $n\rightarrow\infty$.
\end{exam}
Markov's inequality gives that the above convergence also holds in probability. Now, using $[M_{n}-M]=[\Delta_{n}-\Delta]+[N_{n}-N]$ implies that $[\Delta_{n}-\Delta](\infty)\rightarrow 0$ in probability as $n\rightarrow\infty$. In turn, we get:
\begin{equation*}
\sup_{t\geq 0}|\Delta_{n}(t)-\Delta(t)|\rightarrow 0,
\end{equation*}
in probability as $n\rightarrow \infty$. For the case of $\Gamma_{n}$, $\Gamma$ note that we have for any $a>0$:
\begin{equation*}
\Pbb\left(\sup_{t\geq 0}|Q_{n}(t)-Q(t)|>a\right)\leq \frac{1}{a}\|(N_{n}(\infty))^2-(N(\infty))^2\|_{\cL_1(\Pbb)},
\end{equation*}
which holds by Doob's maximal inequality, the fact that $|\Ebb[(N_{n}(\infty))^2-(N(\infty))^2|\cF(\cdot)]|\leq \Ebb[|(N_{n}(\infty))^2-(N(\infty))^2||\cF(\cdot)]$ and $\cup_{K>0}\{\omega:\sup_{0\leq t\leq K}|Q_{n}(t)-Q(t)|>a\}=\{\omega:\sup_{t\geq 0}|Q_{n}(t)-Q(t)|>a\}$. In turn, we get:
\begin{flalign*}
\Pbb\left(\sup_{t\geq 0}|Q_{n}(t)-Q(t)|>a\right)&\leq\frac{1}{a}\left(\|N_{n}(\infty)-N(\infty)\|_{\cL_2(\Pbb)}^2+2\|N(\infty)(N_{n}(\infty)-N(\infty))\|_{\cL_1(\Pbb)}\right) \\
&\leq \frac{4}{a}\left(\|F_{n}(\infty)-F(\infty)\|_{\cL_2(\Pbb)}^2+2\|F(\infty)\|_{\cL_2(\Pbb)}\|F_{n}(\infty)-F(\infty)\|_{\cL_2(\Pbb)}\right) \\
&\leq \frac{4}{a}\left(\|F_{n}(\infty)-F(\infty)\|_{\cL_2(\Pbb)}^2+2\|F(\infty)\|_{\cL_2(\Pbb)}\|F_{n}(\infty)-F(\infty)\|_{\cL_2(\Pbb)}\right), 
\end{flalign*}
where we've also used that $\|F(\infty)\|_{\cL_2(\Pbb)}$ is bounded by $\Ebb[(\int_0^{\infty}e^{-rt}S_0(t)|Z(t)|dt)^2]^{1/2}<\infty$ (for sufficiently big $r$). The above converges to zero as $n\rightarrow\infty$, as for sufficiently big $r$ we have $\|F_{n}(\infty)-F(\infty)\|_{\cL_2(\Pbb)}\rightarrow 0$. In fact for $\wt{Q}:=Q-Q(0)$, $\wt{Q}_{n}:=Q_{n}-Q_{n}(0)$ we have by the above that $\sup_{t\geq 0}|\wt{Q}_{n}(t)-\wt{Q}(t)|\rightarrow 0$ in probability. This yields $[\wt{Q}_{n}-\wt{Q}](\infty)\rightarrow 0$ in probability, implying $[\Gamma_{n}-\Gamma](\infty)\rightarrow 0$ in probability; which lastly gives:
\begin{equation*}
\sup_{t\geq 0}|\Gamma_{n}(t)-\Gamma(t)|\rightarrow 0,
\end{equation*}
in probability as $n\rightarrow \infty$.

Having established a concrete connection between $(\Delta_{n},\Delta)$ and $(\Gamma_{n},\Gamma)$, we move forward with explicit calculations of $\Delta$, $\Gamma$. Beginning with the former denote the measure induced by $S_0$ as $\Qbb^{\num}$ (which is locally equivalent to $\Pbb$) and note that:
\begin{equation*}
M(t)=\int_0^{t}e^{-rs}S_0(s)Z(s)ds+e^{-rt}S_0(t)\int_0^{\infty}e^{-rs}\Ebb_{z}^{\Qbb^{\num}}[Z(s)]ds.
\end{equation*}
Now, under $\Qbb^{\num}$ we have:
\begin{equation*}
dZ(t)=(\eta-kZ(t))dt+bdB^{\Qbb^{\num}}(t), \qquad \eta:=k\theta-b\si\num\rho.
\end{equation*}
Hence, we get:
\begin{equation*}
M(t)=\int_0^{t}e^{-rs}S_0(s)Z(s)ds+e^{-rt}S_0(t)\left(\frac{Z(t)+\eta/r}{r+k}\right).
\end{equation*}
Applying the Kunita-Watanabe decomposition on $M$ w.r.t. $W^1$ (which drives both $S_0$ and $S_1$), we get:
\begin{equation*}
\Delta=\int_0^{\cdot}\theta^{\Delta}(t)dW^1(t), \qquad \theta^{\Delta}(t):=\frac{(b\rho-\num\si\eta/r)}{r+k}e^{-rt}S_0(t)-\frac{\num\si}{r+k}e^{-rt}S_0(t)Z(t).
\end{equation*}
Continuing with $\Gamma$, denote the martingale closed by $[N](\infty)$ as $K$, then:
\begin{flalign*}
K(t)&=\Ebb\left[\int_0^{\infty}\frac{b^2(1-\rho^2)}{(r+k)^2}e^{-2rs}(S_0(s))^2ds\bigg|\cF(t)\right]=\int_0^{t}\frac{b^2(1-\rho^2)}{(r+k)^2}e^{-2rs}(S_0(s))^2ds\\
&+\Ebb\bigg[\int_{t}^{\infty}\frac{b^2(1-\rho^2)}{(r+k)^2}e^{-(2r-(\num\si)^2)s}\cE(2\numr)(s)ds\bigg|\cF(t)\bigg] \\
&=\int_0^{t}\frac{b^2(1-\rho^2)}{(r+k)^2}e^{-2rs}(S_0(s))^2ds+\frac{b^2(1-\rho^2)}{(r+k)^2(2r-(\num\si)^2)}e^{-(2r-(\num\si)^2)t}\cE(2\numr)(t),
\end{flalign*}
for sufficiently big $r$. Hence:
\begin{equation*}
\Gamma=\int_0^{\cdot}\theta^{\Gamma}(t)dW^1(t), \qquad \theta^{\Gamma}(t):=-\frac{2b^2(1-\rho^2)\num\si}{(r+k)^2(2r-(\num\si)^2)}e^{-(2r-(\num\si)^2)t}\cE(2\numr)(t),
\end{equation*}
where it is direct to note that if $\rho$ tends to either $1$ or $-1$, $\Gamma$ vanishes.

\section{Utility-based pricing}

The utility-based approach can also be used for the sake of pricing non-traded streams. To fix ideas we focus, for now, on the case of a finite horizon (i.e. a finite stopping time). Recalling \S 1.2, we define:
\begin{equation*}
v(x):=\sup_{\wt{X}\in\wt{\cX}(x)}\Ebb[\ln(\wt{X}(T))], 
\end{equation*}
and assume:
\begin{equation}\label{eq:ass_fifth}\tag{A5}
\text{$v(x)<\infty$, for some $x>0$,}
\end{equation}
which in turn implies that $v(x)<\infty$, for all $x>0$ by the concavity of the logarithm. In particular, the finiteness of $v(x)$ yields $(\ln(\wt{X}(T)))^{+}\in\cL_1$, for all $\wt{X}\in\wt{\cX}(x)$. To see this, note that for $\delta\in(0,1)$, $\wt{\cX}(x)\ni \wt{X}_{\delta}:=\delta+(1-\delta)\wt{X}$ is bounded away from zero. Hence $\Ebb[\ln(\wt{X}(T))]$ is well-defined with values in $\Rbb\cup\{-\infty\}$, since $\Ebb[(\ln(\wt{X}_{\delta}(T)))^{+}]\geq (1-\delta)\Ebb[(\ln(\wt{X}(T)))^{+}]$. 

In fact, $(\ln(\numX(T)))^{-}\in\cL_1$ along with $(\ln(\wt{X}(T)))^{+}\in\cL_1$, and \cite[Proposition 2.46]{KK21} imply that the wealth process in $(1,\wt{S})$ with initial value $x>0$, generated by $\theta_{i}^{\star}:=x\numX\num_{i}/\wt{S}_{i}$ for $1\leq i \leq d$, is the solution to to $v(x)$. Particularly, we get:
\begin{equation*}
v(x)=\Ebb\left[\ln\left(x+\int_0^{T}(\theta^{\star}(t))^{'}d\wt{S}(t)\right)\right]=\Ebb\left[\ln\left(x\numX(T)\right)\right].
\end{equation*}
By the above, and after recalling the discussion in Remark in \ref{rem:orig_opt}, we derive the log-based certainty equivalent $c(\epsilon)$ of the position $(1,\epsilon)$ as the solution to the following equation:
\begin{equation}\label{eq:ce_first_def}\tag{CE}
\Ebb\left[\ln\left(1+c(\epsilon)+\int_0^{T}(\theta^{\star}(t))^{'}d\wt{S}(t)\right)\right]=\Ebb\left[\ln\left(\wt{X}^{\epsilon}(T)-\epsilon\La(T)\right)\right].
\end{equation}
Rearranging and using the fact that the logarithm turns multiplication into addition, we have:
\begin{equation}\label{eq:ce_sec_def}\tag{CEII}
c(\epsilon)=e^{u(\epsilon)}-1,
\end{equation}
where $u(\epsilon)$ is given in \eqref{eq:value_discounted}. In fact, we may take \eqref{eq:ce_sec_def} as an alternative characterization of the log-based certainty equivalent, and we shall do so henceforth. The advantage is that \eqref{eq:ce_sec_def} remains valid on an infinite horizon setting, does not require \eqref{eq:ass_fifth}, and reduces to \eqref{eq:ce_first_def} whenever the more restrictive assumptions underlying it are in force. 

Putting all these together we get the following result.
\begin{corollary}
Assume the same conditions as in Theorem \ref{the:fourth_value}; then:
\begin{equation}\label{eq:fourth_ce}
\begin{aligned}
c(\epsilon)&+A_1\epsilon+\frac{A_2-A_1^2}{2}\epsilon^2+\left(\frac{A_3}{3}-\frac{A_1A_2}{2}+\frac{A_1^3}{6}\right)\epsilon^3 +\left(\frac{A_4}{4}+G-\frac{A_1A_3}{3}-\frac{A_2^2}{8}+\frac{A_1^2A_2}{4}-\frac{A_1^4}{24}\right)\epsilon^4=o(\epsilon^4),
\end{aligned}
\end{equation}
as $\epsilon\rightarrow 0+$; where:
\begin{equation*}
\begin{aligned}
A_1&:=\Ebb[F(T)], \ \ A_2:=\Ebb[(N(T))^2], \ \ A_3:=\Ebb[(N(T))^3], \ \ A_4:=\Ebb[(N(T))^4], \\
G&:=\Ebb[(\Gamma(T))^2/2-(N(T))^2\Gamma(T)].
\end{aligned}
\end{equation*}
\end{corollary}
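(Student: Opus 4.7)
The plan is to combine Theorem \ref{the:fourth_value} with the definition $c(\epsilon)=e^{u(\epsilon)}-1$ together with the analytic Taylor expansion of $x\mapsto e^{x}-1$ around the origin. Since $u(0)=0$ and Theorem \ref{the:fourth_value} exhibits $u(\epsilon)$ as a polynomial in $\epsilon$ of degree four with an $o(\epsilon^4)$ remainder, substituting this polynomial into $\exp$ reduces the corollary to a purely algebraic identity.

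More precisely, write
\begin{equation*}
u(\epsilon)=b_1\epsilon+b_2\epsilon^2+b_3\epsilon^3+b_4\epsilon^4+o(\epsilon^4),
\end{equation*}
where $b_1=-A_1$, $b_2=-A_2/2$, $b_3=-A_3/3$ and $b_4=-(A_4/4+G)$, as dictated by \eqref{eq:fourth_value}. In particular $u(\epsilon)=O(\epsilon)$, so $u(\epsilon)^k=O(\epsilon^k)$ for each $k\in\{1,2,3,4\}$, while the analytic remainder of the order-four Taylor polynomial of $e^{x}-1$ at zero is $O(x^5)$. Consequently,
\begin{equation*}
c(\epsilon)=e^{u(\epsilon)}-1=u(\epsilon)+\frac{u(\epsilon)^2}{2}+\frac{u(\epsilon)^3}{6}+\frac{u(\epsilon)^4}{24}+o(\epsilon^4).
\end{equation*}

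It then remains to expand each power $u(\epsilon)^k$ in $\epsilon$, keep only terms up to and including $\epsilon^4$ (absorbing everything else into $o(\epsilon^4)$), and collect. Reading off the coefficients one finds the coefficient of $\epsilon$ to be $b_1=-A_1$; of $\epsilon^2$ to be $b_2+b_1^2/2=(A_1^2-A_2)/2$; of $\epsilon^3$ to be $b_3+b_1b_2+b_1^3/6=-A_3/3+A_1A_2/2-A_1^3/6$; and of $\epsilon^4$ to be
\begin{equation*}
b_4+b_1b_3+\frac{b_2^2}{2}+\frac{b_1^2b_2}{2}+\frac{b_1^4}{24}=-\frac{A_4}{4}-G+\frac{A_1A_3}{3}+\frac{A_2^2}{8}-\frac{A_1^2A_2}{4}+\frac{A_1^4}{24}.
\end{equation*}
Moving these four terms to the left-hand side yields exactly \eqref{eq:fourth_ce}.

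There is no real obstacle here; the argument is mechanical once Theorem \ref{the:fourth_value} is in hand. The only minor care needed is the bookkeeping of the $o(\epsilon^4)$ remainders: one should verify that multiplying an $o(\epsilon^4)$ quantity by any bounded function of $\epsilon$, and raising the polynomial part of $u(\epsilon)$ to powers while truncating at degree four, indeed produces only admissible $o(\epsilon^4)$ errors (which is immediate from $u(\epsilon)=O(\epsilon)$ and the boundedness of $(e^{u(\epsilon)}-1)/u(\epsilon)$ near $\epsilon=0$).
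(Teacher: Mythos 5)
Your proposal is correct and follows essentially the same route as the paper: write $u(\epsilon)$ as the fourth-order polynomial from Theorem \ref{the:fourth_value}, substitute into $c(\epsilon)=e^{u(\epsilon)}-1$, expand the exponential to fourth order, and collect coefficients, which match those in \eqref{eq:fourth_ce}. The only item the paper adds is a footnote confirming $u(\epsilon)\in\Rbb$ (in particular $u(\epsilon)>-\infty$ for small $\epsilon$) so that the exponential expansion is legitimate, but this is a minor bookkeeping point.
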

\begin{proof}
Using \eqref{eq:fourth_value} write \footnote{In \S 1 we've shown that $u(\epsilon)<\infty$. In fact, recalling \eqref{eq:ass_illiquid_infinite} it also holds that $u(\epsilon)>-\infty$, by potentially restricting $\epsilon$ to a sub-interval of $(0,\epsilon^{L})$ s.t. $\epsilon x<1$. One way to see this is to take the positive wealth process in $S$: $X^{x}:=(1-\epsilon x)+\epsilon X$ with $X^{x}(0)=1$, where $X$ is the process implied by \eqref{eq:ass_illiquid_infinite}, starting at $x$. In particular, at maturity $X^{x}(T)-\epsilon F(T)$ is bounded below by $1-\epsilon x>0$.}:
\begin{equation*}
u(\epsilon)=u_1\epsilon+u_2\epsilon^2+u_3\epsilon^3+u_4\epsilon^4+o(\epsilon^4),
\end{equation*}
where:
\begin{equation*}
u_1:=-A_1, \ \ u_2:=-\frac{A_2}{2}, \ \ u_3:=-\frac{A_3}{3}, \ \ u_4:=-\left(\frac{A_4}{4}+G\right).
\end{equation*}
Using \eqref{eq:ce_sec_def} and expanding:
\begin{equation*}
c(\epsilon)=u(\epsilon)+\frac{(u(\epsilon))^2}{2}+\frac{(u(\epsilon))^3}{6}+\frac{(u(\epsilon))^4}{24}+o(\epsilon^4).
\end{equation*}
Collecting terms, for order $\epsilon$ we only have a contribution from $u_1$; for order $\epsilon^2$ we have contributions from $u_2$, and $u_1^2/2$; for order $\epsilon^3$ we have contributions from $u_3$, $u_1u_2$, and $u_1^3/6$; finally for order $\epsilon^4$ we have contributions from $u_4$, $u_1u_3$, $u_2^2/2$, $u_1^2u_2/2$, and $u_1^4/24$. Bringing all these together, we get the desired result.
\end{proof}
\begin{rem}
Note that in a complete market setting we have:
\begin{equation*}
\Ebb[F(T)|\cF(\cdot)]=\Ebb[F(T)]+\Delta,
\end{equation*}
i.e. the quadratic variation of $N$ as well as $\Gamma$ vanish. In turn, in this context we get:
\begin{equation*}
A_2=A_1^2, \ \ A_3=A_1^3, \ \ A_4=A_1^4, \ \ G=0.
\end{equation*}
Applying the above to \eqref{eq:fourth_ce} we have that its second, third, and fourth order terms vanish, and the log-based certainty equivalent reduces to $-\epsilon\Ebb[F(T)]$.
\end{rem}

\printbibliography
\end{document}